\documentclass{amsart}
\usepackage{graphicx}
\usepackage{graphicx}
\usepackage{amssymb}
\usepackage{amsfonts}
\usepackage{amsmath}
\usepackage{amsthm}
\newtheorem{thm}{Theorem}[section]
\newtheorem{defn}[thm]{Definition}
\newtheorem{lem}[thm]{Lemma}
\newtheorem{rem}[thm]{Remark}

\numberwithin{equation}{section}

%greeks

\begin{document}

\title{Shortfall Risk Approximations for American Options in the multidimensional Black--Scholes Model}
 %\runtitle{Limit Theorems Under Transaction Costs}
 \author{Yan Dolinsky\\
 Department of Mathematics\\
 ETH, Zurich\\
 Switzerland }%

\address{
 Department of Mathematics, ETH, Zurich 8092, Switzerland\\
 {e.mail: yan.dolinsky@math.ethz.ch}}
 %\affiliation{ETH Zurich}
%\runauthor{Y. Dolinsky}

 \date{\today}
\begin{abstract}
We show that shortfall
risks of American options in a sequence of multinomial
approximations of the multidimensional Black--Scholes (BS) market
converge to the corresponding quantities for similar American
options in the multidimensional BS market with path dependent
payoffs. In comparison to previous papers we consider the multi assets case
for which we use the weak convergence approach.
\end{abstract}

\subjclass[2000]{Primary: 91B28 Secondary: 60F15, 91B30}%
\keywords{American options, Shortfall risk, Weak convergence.}%
\maketitle
\markboth{Y.Dolinsly}{Shortfall risk Approximations}
\renewcommand{\theequation}{\arabic{section}.\arabic{equation}}
\pagenumbering{arabic}
%\end{frontmatter}

\section{Introduction}\label{sec:1}\setcounter{equation}{0}
This paper deals with multinomial approximations of the shortfall risk for American options in the
multidimensional BS (complete) model.
It is well known that in a complete market an American contingent claim can
be hedged perfectly with an initial capital which is equal to the optimal stopping value of the discounted payoff under the unique
martingale measure. In real
market conditions an investor (seller) may not be willing for
various reasons to tie in a hedging portfolio the full initial
capital required for a perfect hedge. In this case the seller is
ready to accept a risk that his portfolio value at an exercise time
may be less than his obligation to pay and he will need additional
funds to fullfil the contract. Thus a portfolio shortfall comes into
the picture.

We deal with a certain type of risk called the
shortfall risk which is defined as the maximal expectation
(with respect to the buyer exercise times) of the discounted shortfall (see \cite{M}).
An investor whose initial capital is less than the option
price still want to compute the minimal possible shortfall risk
and to find a portfolio strategy which minimizes or "almost"
minimizes the shortfall risk. In this paper we
allow only \textit{admissible} self financing portfolios, i.e.
a portfolios with nonnegative wealth process.
This corresponds to the situation when the
portfolio is handled without borrowing of the capital.

For discrete time markets such as the multinomial
models the above problems can be solved by dynamical programming algorithm.
For continuous time models such as the BS model these
problems are much more complicated.

We prove that for American options, the shortfall risk in the multidimensional BS model can be
approximated by a sequence of shortfall risks
in an appropriate multinomial models.
This type of results has a practical value since
the shortfall risks in the multinomial models can be calculated via dynamical programming algorithm.
Our main tools are the extended weak convergence
theory that was developed in \cite{A2}
and the tightness theorems that were
obtained in \cite{MZ}. Since we use the weak convergence
approach we could not provide
error estimates of the above approximations.
Thus, to open problems remains open.
The first one is to obtain error estimates of the above approximations. The second one is to find explicit formulas
for optimal or "almost" optimal hedges in the BS model. It seems that both of the above problems require
new tools.

So far, shortfall risk approximations
were studied only in the one dimensional BS model (see \cite{DK2},
\cite{DK3}). For this case it was
proved that the shortfall risk in a BS market is a limit of the
shortfall risks in an
appropriate sequence of CRR markets.
Furthermore, the authors obtained error estimates
and dynamical programming algorithm for
"almost" optimal hedges. The main tool that
was used in the above papers is Skorohod
embedding tool of i.i.d. random variables
into the one dimensional Brownian motion.
This tool can not be applied for the multidimensional Brownian motion.

Main results of this paper are formulated in the next section. In Section 3 we
derive auxiliary lemmas that will be essential in the proof of the main results. In Section \ref{sec4} we
complete the proof of main results of the paper. In Section 5
we analyze the multinomial models and
provide a dynamical programming algorithm for the shortfall risk
and the corresponding optimal portfolios.

\section{Preliminaries and main results}\label{sec:2}\setcounter{equation}{0}
First we introduce the multidimensional BS market.
Consider a
complete probability space $(\Omega_W, P^W)$
together with a standard $d$--dimensional continuous in time
Brownian motion $\{W(t)=(W_1(t),...,W_d(t))\}_{t=0}^\infty$, and the filtration $\mathcal{F}^W_t=\sigma\{W(s)|s\leq t\}$.
We assume that the $\sigma$--algebras contain the null sets.
A BS
financial market consists of a savings account $B(t)$ with an
interest rate $r$, assuming without loss of generality that $r=0$,
i.e.
\begin{equation}\label{2.1}
B(t)=B(0)>0
\end{equation}
and of $d$ risky stocks $S^{W}=(S^{W}_1,...,S^{W}_d)$ given by the following
equation
\begin{equation}\label{2.2}
\begin{split}
S^{W}_i(t)=S_i(0)\exp(\sum_{j=1}^d
\sigma_{ij}W_j(t)+(b_i-\frac{1}{2} \sum_{j=1}^d \sigma^2_{ij})t),
\ S_i(0)>0
\end{split}
\end{equation}
where $b\in\mathbb{R}^d$ is a constant vector and $\sigma\in
M_d(\mathbb{R})$ is a constant nonsingular matrix.

Let $T<\infty$ be the
maturity date of our American option and let $\mathcal{T}^W_{[0,T]}$ be the set of all stopping
times with respect to $\mathcal{F}^W$ which take values in $[0,T]$.
Denote by
$(\mathbb{D}([0,T];\mathbb{R}^d),\mathcal{S})$ the space
of all right continuous functions with left hand limits, equipped with the Skorohod topology (see \cite{B}).
Let $F:[0,T]\times(\mathbb{D}([0,T];\mathbb{R}^d),\mathcal{S})\rightarrow\mathbb{R}_{+}$
be a measurable functions such that there exists a constant $C>0$ which satisfies
\begin{equation}\label{2.3}
\begin{split}
\sup_{0\leq t\leq T} F(t,x)\leq C\sup_{0\leq t \leq T} |x(t)|, \ \ \forall{x}\in\mathbb{D}([0,T];\mathbb{R}^d).
\end{split}
\end{equation}
Furthermore, we assume that
for any $t\in [0,T]$ and $x,y\in \mathbb{D}([0,T];\mathbb{R}^d)$:\\
i. $F(\cdot,x)$ is a right continuous function with left hand
limits.\\
ii. $F(t,x)=F(t,y)$ if $x(s)=y(s)$ for any $s\leq{t}$.\\
iii. If $x$ is continuous at $t$ then $F$ is continuous at $(x,t)$ (with respect to the product topology).

Next, consider an American option with the
payoff process given by
\begin{equation}\label{2.4}
Y^W(t)=F(t,S^W), \ 0\leq t\leq{T}.
\end{equation}
From the assumptions above it follows that ${\{Y^W(t)\}}_{t=0}^T$ is a  $c\grave{a}dl\grave{a}g$
adapted stochastic process and $E^W[\sup_{0\leq t\leq T}Y^W(t)]<\infty$. Denote by $\tilde{P}^W$ the unique martingale measure for the above
model. Using standard arguments it follows that the restriction of
the probability measure $\tilde{P}^{W}$ to the $\sigma$--algebra
$\mathcal{F}^{W}_t$ satisfies
\begin{equation}\label{2.5}
M(t)=\frac{d\tilde{P}^W}{dP^{W}}|\mathcal{F}^{W}_t=\exp(-\frac{1}{2}||\theta||^2t-
\langle\theta,W(t)\rangle)
\end{equation}
where $\theta=b\sigma^{*}$. We denote by $||\cdot||$ and $\langle\cdot,\cdot\rangle$
the standard norm and the scalar product of $\mathbb{R}^d$, respectively.

A self financing strategy
$\pi$ with a horizon $T$ and an initial capital $x$ (see \cite{S}) is a
$d$--dimensional progressively measurable process
$\pi=\{\gamma(t)\}_{t=0}^{T}$ which satisfies
\begin{equation}\label{2.6}
\begin{split}
\int_{0}^{T}{\langle\gamma(t),{S}^W(t)\rangle}^2dt<\infty\ \ \mbox{a.s.}
\end{split}
\end{equation}
For a strategy $\pi$ the portfolio value process
${\{V^{\pi}(t)\}}_{t=0}^T$ is given by
\begin{equation}\label{2.7}
{V}^{\pi}(t)=x+\int_{0}^{t} \langle\gamma(u),d{{S}^W(u)}\rangle.
\end{equation}
Recall, (see \cite{LS}) that stochastic integrals with respect to the Brownian motion has a continuous modification
and so for any self financing strategy $\pi$ the corresponding portfolio value process is a continuous one.

A self financing strategy $\pi$ is called \textit{admissible} if
${V}^{\pi}(t)\geq{0}$ for all $t\in{[0,{T}]}$ and the set of such
strategies with an initial capital no bigger than $x$ will be denoted by
$\mathcal{A}^W(x)$. We set $A^W=\bigcup_{x>0}A^W(x)$.
For an \textit{admissible} self financing strategy
$\pi$ the shortfall risk is given by (see \cite{M}),
\begin{equation}\label{2.8}
R(\pi)=\sup_{\tau\in\mathcal{T}^W_{[0,T]}}E^W[(Y^W({\tau})-
{V}^{\pi}(\tau))^+],
\end{equation}
which is the maximal possible expectation with respect to the
probability measure $P^W$ of the (discounted) shortfall.
The shortfall risk for
an initial capital $x$ is given by
\begin{equation}\label{2.9}
R(x)=\inf_{\pi\in\mathcal{A}^W(x)} R(\pi).
\end{equation}

Next, we introduce the sequence of multinomial markets that we use
in order to approximate the shortfall risk in the BS model. The same
markets were used in \cite{H} in order to approximate European
option prices in the $d$-- dimensional BS model. Let $A\in
M_{d+1}(\mathbb{R})$ be an orthogonal matrix such that it last
column equals to $(\frac{1}{\sqrt{d+1}},...,\frac{1}{\sqrt{d+1}})^{*}$.
Let $\Omega_{\xi}={\{1,2,...,d+1\}}^\infty$ be the space of finite
sequences $\omega=(\omega_1,\omega_2,...)$;
$\omega_i\in{\{1,2,...,d+1\}}$ with the product probability
$P^{\xi}={\{\frac{1}{d+1},...,\frac{1}{d+1}\}}^\infty$. Define a
sequence of i.i.d. random vectors $\xi^{(1)},\xi^{(2)},...$ by
\begin{equation}\label{2.10}
\begin{split}
\xi^{(i)}(\omega)=\sqrt{d+1}(A_{\omega_i 1},A_{\omega_i
2}...,A_{\omega_i d}), \ \ i\in\mathbb{N}.
\end{split}
\end{equation}
Let $\mathcal{F}^{\xi}_m=\sigma\{\xi^{(k)}|k\leq{m}\}$, $m\geq 0$ $(\mathcal{F}^{\xi}_0=\{\emptyset,\Omega_{\xi}\}$).
Denote by $\mathcal{T}^{\xi}_m$ the set of all stopping times with
respect to the filtration ${\{\mathcal{F}^{\xi}_k\}}_{k=0}^\infty$
with values from $0$ to $m$.

For any $n$ consider the $n$--step multinomial market which consists
of a savings account $B^{(n)}(t)$ given by
\begin{equation}\label{2.11}
B^{(n)}(t)=B(0)>0
\end{equation}
and of $d$ risky stocks $S^{\xi,n}=(S^{\xi,n}_1,...,S^{\xi,n}_d)$ given by the
formulas $S^{\xi,n}_i(t)=S_i(0)$ for $t\in{[0,T/n)}$ and
\begin{equation}\label{2.12}
\begin{split}
S^{\xi,n}_i(t)=S_i(0)\prod_{m=1}^{k}(1+\frac{b_i
T}{n}+\sqrt{\frac{T}{n}}\sum_{j=1}^{d}\sigma_{ij}\xi^{(m)}_j),  \
kT/n \leq t<(k+1)T/n, \ k=1,...,n.
\end{split}
\end{equation}
We assume that $n$ is sufficiently large such that the terms in the
above product are positive a.s. The market
is active at the times $0,\frac{T}{n},\frac{2T}{n},...,T$. It is well known that this
market is complete and we denote by $\tilde{P}^{\xi}_n$ the unique
martingale measure. Define the stochastic process
${\{M^{(n)}(t)\}}_{t=0}^T$
\begin{equation}\label{2.13}
\begin{split}
M^{(n)}(t)=\frac{d\tilde{P}^{\xi}_n}{dP^{\xi}}|\mathcal{F}^{\xi}_k, \
\ kT/n \leq t<(k+1)T/n, \ k=0,1,...,n.
\end{split}
\end{equation}
Clearly ${\{M^{(n)}(\frac{kT}{n})\}}_{k=0}^n$ is a martingale with respect
to the probability measure $P^\xi$ and the filtration
${\{\mathcal{F}^{\xi}_k\}}_{k=0}^n$. Explicit formulas for
$M^{(n)}(t)$ were obtained in \cite{H}. Consider an American option
with the adapted payoff process
\begin{equation}\label{2.14}
Y^{\xi,n}(k)=F(\frac{kT}{n},S^{\xi,n}), \  \ 0 \leq k\leq n.
\end{equation}
A self financing strategy $\pi$ with an initial capital $x$ and a
horizon $n$ (see \cite{S}) is a sequence $\pi=(\gamma(1),...,\gamma(n))$ where
$\gamma(k)$ are $\mathcal{F}^{\xi}_{k-1}$-measurable random vectors.
The portfolio value $V^\pi(k)$, $k=0,1,...,n$ is given by
\begin{equation}\label{2.15}
V^\pi(k)=x+\sum_{i=0}^{k-1}\langle\gamma({i+1}),
({S}^{\xi,n}((i+1)T/n)-{S}^{\xi,n}(iT/n))\rangle.
\end{equation}
We call a self financing strategy $\pi$ \textit{admissible} if
$V^\pi(k)\geq{0}$ for any $k\leq{n}$. Denote by
$\mathcal{A}^{\xi,n}(x)$ the set of all \textit{admissible} self
financing strategies with an initial capital no bigger than $x$, let $\mathcal{A}^{\xi,n}=
\bigcup_{x>0}\mathcal{A}^{\xi,n}(x)$. The definitions
for the shortfall risks in the multinomial markets are similar to
the definitions in the BS model. Thus for the $n$--step multinomial
market the shortfall risks are given by
\begin{equation}\label{2.16}
\begin{split}
R_n(\pi)=\max_{\tau\in\mathcal{T}^{\xi}_n}E^\xi[(Y^{\xi,n}({\tau})-V^{\pi}({\tau}))^{+}]
\ \ \mbox{and} \ \ R_n(x)=
\inf_{\pi\in\mathcal{A}^{\xi,n}(x)}R_n(\pi),
\end{split}
\end{equation}
where $E^\xi$ is the expectation with respect to $P^\xi$.

The following theorem is the main result of the paper and it says that the shortfall risk $R(x)$ for an
initial capital $x$ of an American option in the multidimensional
BS market can be approximated by a sequence of shortfall risks with
an initial capital $x$ of an American options in the multinomial
markets defined above. This result has a practical value since for any $n$ the shortfall risk $R_n(x)$
can be calculated by dynamical programming algorithm which is given in Section 5.
\begin{thm}\label{thm2.1}
For any $x>0$
\begin{equation}\label{2.17}
lim_{n\rightarrow\infty}R_n(x)=R(x).
\end{equation}
\end{thm}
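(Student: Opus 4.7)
The plan is to prove Theorem \ref{thm2.1} by establishing both $\limsup_{n\to\infty} R_n(x)\le R(x)$ and $\liminf_{n\to\infty} R_n(x)\ge R(x)$. The starting point is a Donsker-type invariance principle already used in \cite{H}: the piecewise constant interpolation of $S^{\xi,n}$ converges weakly in the Skorohod topology to $S^W$. Because discrete-time portfolio value processes are in general not tight in Skorohod (they are c\`adl\`ag step functions whose jump structure depends on $n$), the workhorse for passing to the limit will be the Meyer--Zheng pseudo-path tightness criterion of \cite{MZ} together with Aldous's extended weak convergence theory \cite{A2}, which permits interchanging weak limits with suprema over stopping times.

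For the upper bound, fix $\ve>0$ and choose an admissible strategy $\pi^\ve\in\cA^W(x)$ with $R(\pi^\ve)\le R(x)+\ve$. By a density argument (approximating in $L^2(dt\otimes d\tilde{P}^W)$ along the martingale $S^W$) one may assume the integrand $\gam^\ve$ is a simple, bounded process whose values depend continuously on finitely many samples of $S^W$. Sampling on the grid $\{kT/n\}_{k=0}^n$ yields a strategy $\pi_n$ in $\cA^{\xi,n}(x+\eta_n)$ for some $\eta_n\to 0$ (the small capital correction is needed to keep admissibility $V^{\pi_n}\ge 0$ despite the discretization error). The joint weak convergence of $(S^{\xi,n},V^{\pi_n})$ to $(S^W,V^{\pi^\ve})$, combined with the continuity and growth properties i)--iii) of $F$ and uniform integrability arising from (\ref{2.3}), gives $\limsup_n R_n(\pi_n)\le R(\pi^\ve)\le R(x)+\ve$; letting $\ve\downarrow 0$ finishes this direction.

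For the lower bound, choose $\pi_n\in\cA^{\xi,n}(x)$ with $R_n(\pi_n)\le R_n(x)+1/n$. Under $\tilde{P}^\xi_n$, the interpolated portfolio $V^{\pi_n}$ is a nonnegative martingale with bounded initial value, so by \cite{MZ} the triple $(S^{\xi,n},V^{\pi_n},M^{(n)})$ is tight in the pseudo-path topology. Extract a subsequential limit $(S^W,V,M)$, where $M(t)=d\tilde{P}^W/dP^W|\cF^W_t$ as in (\ref{2.5}). The limit $V$ is a nonnegative $\tilde{P}^W$-martingale with $V(0)\le x$, so by the martingale representation theorem in the complete BS market it coincides with $V^\pi$ for some $\pi\in\cA^W(x)$. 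Aldous's extended weak convergence, together with the pathwise continuity condition iii) on $F$ (which applies because $S^W$ is continuous), then allows us to pass to the limit in $R_n(\pi_n)=\sup_{\tau\in\cT^\xi_n}E^\xi[(Y^{\xi,n}(\tau)-V^{\pi_n}(\tau))^+]$ and conclude $\liminf_n R_n(x)\ge R(\pi)\ge R(x)$.

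The main obstacle is the lower bound, and more precisely the triple task of (a) proving Meyer--Zheng tightness of the portfolio values, which are martingales that lack Skorohod tightness, (b) identifying the pseudo-path limit as the wealth process of an admissible self-financing BS strategy, which requires passing to the limit inside a stochastic integral against the continuous martingale $S^W$, and (c) exchanging the weak limit with the supremum over the stopping times $\tau_n\in\cT^\xi_n$, which is exactly what Aldous's framework is designed for but needs to be checked against the shifted payoff functional $F$. These steps are presumably packaged into the auxiliary lemmas announced for Section 3, and they explain why the one-dimensional Skorohod embedding approach of \cite{DK2,DK3} does not extend to the multi-asset setting.
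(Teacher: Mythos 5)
Your overall architecture --- two inequalities, Meyer--Zheng tightness for the discrete wealth processes, Aldous's extended weak convergence to control the stopping times --- matches the paper's, but there are two genuine gaps. The first is in your lower bound, at the identification step: you assert that the Meyer--Zheng limit $V$ of the discrete portfolio values is a nonnegative $\tilde P^W$-martingale and then invoke martingale representation. This is the crux of the whole argument and it is false as stated: pseudo-path convergence of a sequence of (super)martingales yields in the limit only a \emph{supermartingale}, and only after taking the optional projection onto the limiting filtration $\cF^{S^W}$ --- the raw limit $V$ need not even be adapted to $\cF^{S^W}$. This is exactly what Lemma \ref{lem3.3} is for: the paper keeps $Z^{(n)}=V^{\pi_n}M^{(n)}$ as a $P^\xi$-martingale (no change of measure in the prelimit), shows that $Q(t)=E(Z(t)|\cF^{S^W}_t)$ is a right-continuous supermartingale, truncates by the regular martingale $\Gamma$ to land in class $\cD$, applies the Doob decomposition plus martingale representation to produce an admissible $\pi$ with $V^\pi\ge (Q\wedge\Gamma)/M$, and then uses Jensen's inequality to push the conditional expectation $Q=E(Z|\cF^{S^W})$ through the convex function $(\cdot)^+$ in (\ref{4.17-}). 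Without the projection/supermartingale/Jensen steps the limit cannot be identified with a wealth process; the fix is not ``passing to the limit inside a stochastic integral'' as your point (b) suggests, but \emph{dominating} the projected limit by one.

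The second gap is in your upper bound. Sampling an $L^2$-approximating simple integrand on the grid does not secure admissibility: the discretization error has no pathwise bound, so adding capital $\eta_n\to0$ cannot guarantee $V^{\pi_n}\ge 0$ on events where the stock prices are large. The paper avoids discretizing the integrand altogether: Lemma \ref{lem3.3-} replaces the near-optimal wealth by $Q(t)/M(t)$ with $Q(t)=E^W(\psi(S^W)|\cF^W_t)$ for a continuous bounded $\psi\ge 0$ with $E^W\psi(S^W)<x$; in the $n$-step model the process $E^{\xi}(\psi(S^{\xi,n})|\cF^{\xi}_k)/M^{(n)}(kT/n)$ is then automatically a nonnegative, hence admissible, wealth process by completeness, and $(S^{\xi,n},H^{(n)})\Rightarrow(S^W,H)$ follows from the extended weak convergence $S^{\xi,n}\Rrightarrow S^W$ of Lemma \ref{lem3.4}. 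Finally, note that the hard stopping-time step --- limits of near-optimal discrete stopping times are only randomized stopping times $\nu$ of the limit filtration, and one needs $E\Phi(\nu)\le\sup_{\tau}E\Phi(\tau)$ --- belongs to the upper bound, not to the lower bound where your point (c) places it; in the lower bound the required step is the easier converse, namely approximating a finite-valued continuous-time stopping time by discrete ones (Lemma \ref{lem3.2} together with Lemma 3.2 of \cite{D}).
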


The proof (which is given in Section 4) consists of two parts. In the first part
we prove the inequality
$R(x)\leq \liminf_{n\rightarrow\infty}R_n(x)$ and in the second part we prove that $R(x)\geq \limsup_{n\rightarrow\infty}R_n(x)$.
In the first part we take a sequence of "almost" optimal portfolios ${\{\pi_n\}}_{n=1}^\infty$
for the multinomial markets and consider their limit in some sense that will be explained explicitly in Section 3. From the limit process we construct
a portfolio $\pi$ in the BS model such that $R(\pi)\leq \liminf_{n\rightarrow\infty}R_n(\pi_n)=\liminf_{n\rightarrow\infty}R_n(x)$. The second part is proved
by a reversed operations. Namely, we take an "almost" optimal portfolio $\pi$ in the BS model which has some smoothness properties. The existence of
such portfolio will be proved by applying density arguments.
From this portfolio we construct a sequence of portfolios ${\{\pi'_n\}}_{n=1}^\infty$ in the multinomial models which satisfy
$\limsup_{n\rightarrow\infty}R_n(\pi'_n)\leq R(\pi)$.

\section{Auxiliary lemmas}\label{sec3}\setcounter{equation}{0}
Let $I\subset [0,T]$ be a dense set in $[0,T]$ and let $\mathcal{T}_I\subset\mathcal{T}^W_{[0,T]}$
be the set of all stopping times with a finite number of values which belongs to $I$.
\begin{lem}\label{lem3.2}
For any $\pi\in\mathcal{A}^W$,
\begin{equation}\label{3.2}
R(\pi)=\sup_{\tau\in \mathcal{T}_I}E[(Y^W({\tau})-V^\pi({\tau}))^{+}].
\end{equation}
\end{lem}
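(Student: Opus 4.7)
The inclusion $\mathcal{T}_I \subset \mathcal{T}^W_{[0,T]}$ gives the trivial inequality $R(\pi) \geq \sup_{\tau \in \mathcal{T}_I} E[(Y^W(\tau) - V^\pi(\tau))^+]$. The plan is to prove the reverse: for every $\tau \in \mathcal{T}^W_{[0,T]}$, construct $\tau_n \in \mathcal{T}_I$ with $\tau_n \to \tau$ a.s., and pass to the limit in the expectation.

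For the construction, I would take finite sets $I_n = \{t^n_0 < \cdots < t^n_{k_n}\} \subset I$ whose mesh tends to $0$ and with $t^n_{k_n} \to T$, and set
\begin{equation*}
\tau_n = \min\{t^n_i \in I_n : t^n_i \geq \tau\} \text{ on } \{\tau \leq t^n_{k_n}\}, \qquad \tau_n = t^n_{k_n} \text{ on } \{\tau > t^n_{k_n}\}.
\end{equation*}
A direct check gives $\{\tau_n = t^n_i\} = \{t^n_{i-1} < \tau \leq t^n_i\} \in \mathcal{F}^W_{t^n_i}$ for $i < k_n$ and $\{\tau_n = t^n_{k_n}\} = \{\tau > t^n_{k_n-1}\} \in \mathcal{F}^W_{t^n_{k_n}}$, so each $\tau_n$ is an $\mathcal{F}^W$-stopping time with finitely many values in $I$; hence $\tau_n \in \mathcal{T}_I$, and by construction $|\tau_n - \tau| \to 0$ a.s.

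To pass to the limit, I would invoke that $V^\pi$ is continuous (as stated after (2.7)) and that $Y^W = F(\cdot, S^W)$ is a.s.\ continuous on $[0,T]$: since $S^W$ is an a.s.\ continuous function of Brownian motion, hypothesis (iii) on $F$ upgrades $Y^W$ from merely c\`adl\`ag to continuous along the realised path. Consequently $V^\pi(\tau_n) \to V^\pi(\tau)$ and $Y^W(\tau_n) \to Y^W(\tau)$ a.s., whence $(Y^W(\tau_n) - V^\pi(\tau_n))^+ \to (Y^W(\tau) - V^\pi(\tau))^+$ a.s. The sequence is dominated by $\sup_{0\leq t\leq T} Y^W(t) \leq C \sup_{0\leq t\leq T}|S^W(t)|$, which is $P^W$-integrable by (2.3) combined with Doob's inequality applied to the geometric Brownian motion. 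Dominated convergence then yields
\begin{equation*}
E\bigl[(Y^W(\tau) - V^\pi(\tau))^+\bigr] = \lim_{n\to\infty} E\bigl[(Y^W(\tau_n) - V^\pi(\tau_n))^+\bigr] \leq \sup_{\tau' \in \mathcal{T}_I} E\bigl[(Y^W(\tau') - V^\pi(\tau'))^+\bigr],
\end{equation*}
and taking the supremum over $\tau \in \mathcal{T}^W_{[0,T]}$ closes the argument.

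The subtle point is the use of hypothesis (iii): it promotes the a priori only right-continuous process $Y^W$ to a fully continuous one when evaluated along the continuous driver $S^W$. Without that promotion one would be forced to approximate $\tau$ from strictly above, which fails at $\tau = T$ unless $T \in I$ (not assumed in the statement). With (iii) in hand, the weaker two-sided approximation $|\tau_n - \tau| \to 0$ is already enough.
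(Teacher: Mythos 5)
Your proposal is correct and follows essentially the same route as the paper: discretize an (almost optimal) stopping time from above onto a finite subset of $I$, verify the stopping-time property, and pass to the limit using the a.s.\ continuity of $Y^W$ and $V^\pi$ together with the dominated-convergence bound from (2.3). Your explicit justification of why hypothesis (iii) makes $Y^W$ continuous along the realised path is exactly the content the paper compresses into ``the assumptions on $F$.''
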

\begin{proof}
Choose $\epsilon>0$. There exists $\tau\in\mathcal{T}^W_{[0,T]}$ such that
\begin{equation}\label{3.3}
R(\pi)< E[(Y^W({\tau})-V^\pi({\tau}))^{+}]+\epsilon.
\end{equation}
For any $n$ there exists a finite set $I_n\subset I$ for which $\bigcup_{z\in I_n} (z-\frac{1}{n},z+\frac{1}{n})\supseteq [0,T]$.
Let $a_n$ be the maximal element of $I_n$. Define $\tau_n=\min\{t\in I_n|t\geq\tau\}\mathbb{I}_{\tau_n\leq a_n}+a_n \mathbb{I}_{\tau_n>a_n}$,
 where
$\mathbb{I}_D=1$ if an event $D$ occurs and =0 if not.
Clearly, $\tau_n\leq a_n$ a.s. and for $t\in I_n\setminus\{a_n\}$ we have $\{\tau_n\leq t\}=\{\tau\leq t\}\in \mathcal{F}^W_t$.
Thus $\tau_n\in \mathcal{T}_I$. Furthermore, $|\tau_n-\tau|\leq \frac{2}{n}$ and so $\tau_n\rightarrow\tau$ a.s.
From (\ref{3.3}) and the assumptions on $F$ we obtain
\begin{eqnarray}\label{3.4}
&R(\pi)< \epsilon+E[\lim_{n\rightarrow\infty}(Y^W(\tau_n)-V^\pi({\tau_n}))^{+}]=\epsilon+\\
&\lim_{n\rightarrow\infty}E[(Y^W(\tau_n)-V^\pi(\tau_n))^{+}]\leq
\epsilon+\sup_{\tau\in \mathcal{T}_I}E[(Y^W({\tau})-V^\pi({\tau}))^{+}]\nonumber
\end{eqnarray}
and the result follows by letting $\epsilon\downarrow{0}$.
\end{proof}

The next lemma provides a general result for the shortfall risk measure.
\begin{lem}\label{lem3.3-}
Let $x>0$. For any $\epsilon>0$ there exists $\psi\in
C((\mathbb{D}([0,T];\mathbb{R}^d),\mathcal{S}))$ such that the
martingale which given by
$Q(t)=E^W(\psi(S^W)|\mathcal{F}^W_t)$, $t\leq{T}$ is satisfying
\begin{equation}\label{3.15+}
\begin{split}
Q(0)<x  \ \ \mbox{and} \ \
R(x)>\sup_{\tau\in\mathcal{T}^W_{[0,T]}}E^W \bigg(\bigg(Y^W({\tau})-\frac{Q({\tau})}{M({\tau})}\bigg)^{+}\bigg)-\epsilon.
\end{split}
\end{equation}
\end{lem}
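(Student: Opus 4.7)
\emph{Plan.} I would start from an almost-optimal admissible portfolio $\pi^*$, lift its value process to a dominating $\tilde P^W$-martingale $N$ (so that $MN$ is automatically a $P^W$-martingale of the required form $Q(t)=E^W[\psi(S^W)\mid\cF^W_t]$), and approximate the terminal random variable by a continuous functional on the Skorokhod path space. Concretely, by the definition $R(x)=\inf_{\pi\in\cA^W(x)}R(\pi)$, choose $\pi^*\in\cA^W(x)$ with $R(\pi^*)<R(x)+\ve/3$. The value $V^{\pi^*}$ is a non-negative $\tilde P^W$-supermartingale with $V^{\pi^*}(0)\leq x$; after an appropriate truncation (at the cost of arbitrarily small extra shortfall) placing $V^{\pi^*}$ in class~D with enough integrability, Doob--Meyer gives a non-negative $\tilde P^W$-martingale $N\geq V^{\pi^*}$ with $N(0)=V^{\pi^*}(0)\leq x$, hence shortfall $\leq R(\pi^*)$. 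Subtract $\delta\in(0,\ve/3)$ to get $\bar N:=N-\delta$, still a $\tilde P^W$-martingale with $\bar N(0)<x$ and shortfall at most $R(x)+2\ve/3$ (using $(Y-N+\delta)^+\leq(Y-N)^++\delta$).

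\emph{Continuous approximation.} Define the $P^W$-martingale $Q^*(t):=M(t)\bar N(t)$; then $Q^*(0)<x$ and $Q^*(t)/M(t)=\bar N(t)$. Its terminal value $Q^*(T)$ is $\cF^W_T$-measurable. Because $\sigma$ is nonsingular, $W$ is a continuous functional of $S^W$ (by inverting \eqref{2.2}); in particular $\cF^W_T=\sigma(S^W)$, so $Q^*(T)=\Phi(S^W)$ for some Borel $\Phi:\mathbb{D}([0,T];\mathbb{R}^d)\to\mathbb{R}$ which lies in $L^p(P^W)$ for a suitable $p>1$ (by the integrability secured in the previous step and the log-normal moments of $M(T)$ from \eqref{2.5}). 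Since $\mathbb{D}([0,T];\mathbb{R}^d)$ is Polish, $C_b(\mathbb{D})$ is dense in $L^p$ of the law of $S^W$; pick $\psi\in C(\mathbb{D})$ with $\|\psi(S^W)-\Phi(S^W)\|_{L^p(P^W)}$ as small as desired.

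\emph{Shortfall bound.} Let $Q(t):=E^W[\psi(S^W)\mid\cF^W_t]$. Since $Q(0)\to Q^*(0)<x$ as $\psi\to\Phi$ in $L^1$, we have $Q(0)<x$ for $\psi$ close enough. The difference $(Q-Q^*)/M$ is a $\tilde P^W$-martingale with terminal $(\psi(S^W)-\Phi(S^W))/M(T)$. Applying Doob's $L^2(\tilde P^W)$ inequality, together with Cauchy--Schwarz and the change of measure $dP^W/d\tilde P^W=1/M(T)$, gives
\[
\sup_{\tau\in\cT^W_{[0,T]}}E^W\!\left[\left|\frac{Q(\tau)}{M(\tau)}-\bar N(\tau)\right|\right]\leq C(\te,T)\,\|\psi(S^W)-\Phi(S^W)\|_{L^p(P^W)},
\]
where one uses that $1/M(T)$ has moments of all orders by \eqref{2.5}. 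Combined with $|a^+-b^+|\leq|a-b|$ and the shortfall bound from the first step, this yields $\sup_\tau E^W[(Y^W(\tau)-Q(\tau)/M(\tau))^+]<R(x)+\ve$ for $\psi$ sufficiently close to $\Phi$.

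\emph{Main obstacle.} The delicate point is the uniform-in-$\tau$ shortfall control. Continuous approximation of $\Phi$ naturally lives in $L^p$ of the law of $S^W$ under $P^W$, whereas the martingale error $(Q-Q^*)/M$ and Doob's inequality live under $\tilde P^W$. The bridge is the explicit log-normal structure of $M$ in \eqref{2.5}, giving $1/M(T)\in L^q(\tilde P^W)$ for every $q$, so that the measure change can be absorbed via Cauchy--Schwarz. A more routine but non-trivial technical ingredient is the truncation in the first step, which secures class~D and the required $L^p$-integrability; it relies on the boundedness assumption \eqref{2.3} on $F$.
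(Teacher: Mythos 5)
Your architecture matches the paper's almost step for step: take a near-optimal admissible $\pi^*$, truncate its value (super)martingale to obtain a class-D object, pass to the dominating martingale from the Doob decomposition, shift down so the initial value is strictly below $x$, identify the terminal value with a Borel functional of $S^W$, and approximate by a continuous $\psi$. (The paper performs the decomposition under $P^W$ for $V^{\pi}M$ truncated at $\Gamma(t)=E(\sup_{u}Y^W(u)M(u)\mid\mathcal{F}^W_t)$ --- a truncation that costs exactly zero shortfall since $\Gamma/M\geq Y^W$ --- whereas you decompose under $\tilde P^W$ and multiply by $M$ afterwards; these are equivalent.)

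The one genuine weak point is the integrability you need for the final uniform-in-$\tau$ estimate. You assert that $\Phi(S^W)=Q^*(T)$ lies in $L^p(P^W)$ for some $p>1$ ``by the integrability secured in the previous step,'' but class D only guarantees that the martingale part of the Doob decomposition has an $L^1$ terminal value: $N(T)=(V^{\pi^*}\wedge c)(T)+A(T)$ with the compensator $A(T)$ a priori only integrable, and your Doob-$L^2$/Cauchy--Schwarz machinery collapses without more than that. The claim is repairable: if the truncation is at a constant level $c$, then $E^{\tilde P^W}[A(T)-A(t)\mid\mathcal{F}^W_t]\leq c$ for all $t$, so Garsia's energy estimate gives $E^{\tilde P^W}[A(T)^p]\leq p!\,c^p$, and the log-normal moments of $M(T)$ from (\ref{2.5}) then place $M(T)N(T)$ in every $L^p(P^W)$ --- but this must be said, since it is precisely the point at which your route is more demanding than the paper's. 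The paper sidesteps the issue by staying in $L^1$: it approximates $U(T)$ by $\psi_n(S^W)$ in $L^1(P^W)$ only, controls $\sup_t|U(t)-Q^{(n)}(t)|$ via Doob's weak-type maximal inequality, and splits the shortfall estimate on the event $C_n=\{\sup_t|U(t)-Q^{(n)}(t)|>\delta\}$, whose contribution vanishes by uniform integrability of $\sup_t Y^W(t)$ (a consequence of (\ref{2.3})), while off $C_n$ the error is at most $\delta\sup_t(1/M(t))$, which is integrable. Either supply the energy estimate or switch to that weak-type argument; as written, the $L^p$ step is a gap.
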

\begin{proof}
Let $\epsilon>0$. Set
$K=E^W[\sup_{0\leq t\leq T}\frac{1}{M({t})}]<\infty$
and $\delta=\frac{\epsilon}{2(K+1)}$. There exists
$\pi\in\mathcal{A}(x)$ such that $R(\pi)<R(x)+\delta$. The process
$\Phi(t):=V^{\pi}(t)M(t)$, $t\leq{T}$ is a
supermartingale with respect to $P^W$. Introduce the regular martingale
$\Gamma(t)=E(\sup_{0\leq u\leq T}Y^W(u)M(u)|\mathcal{F}^W_t)$, $t\leq T$.
The process $\Psi(t):=\Phi(t)\wedge \Gamma(t)$ is a
supermartingale of class $\mathcal{D}$. By
Doob's decomposition theorem there exists a continuous martingale
${\{U(t)\}}_{t=0}^T$ such that $U(0)=\Psi(0)\leq\Phi(0)=x$ and
$U(t)\geq\Psi(t)$ a.s. for all $t\leq{T}$. Observe that
\begin{eqnarray}\label{3.15++}
&\sup_{\tau\in\mathcal{T}^W_{[0,T]}}E^W\bigg(\bigg(Y^W({\tau})-\frac{U({\tau})}{M({\tau})}\bigg)^{+}\bigg)\leq
\sup_{\tau\in\mathcal{T}^W_{[0,T]}}E^W\bigg(\bigg(Y^W({\tau})-\\
&\frac{\Psi({\tau})}{M({\tau})}\bigg)^{+}\bigg)=
\sup_{\tau\in\mathcal{T}^W_{[0,T]}}E^W[(Y^W({\tau})-V^{\pi}({\tau}))^{+}]<R(x)+\delta.\nonumber
\end{eqnarray}
Next, choose a sequence $0\leq\psi_n\in
C((\mathbb{D}([0,T];\mathbb{R}^d),\mathcal{S}))$, $n\geq{1}$ such that
\begin{equation}\label{3.15+++}
\begin{split}
\lim_{n\rightarrow\infty}E^W|\psi_n(S^W)-U(T)|=0 \ \mbox{and} \
E^W\psi_n(S^W)<E^W U(T)\leq{x}, \ n\in\mathbb{N}.
\end{split}
\end{equation}
Set $Q^{(n)}(t)=E^W(\psi_n(S^W)|\mathcal{F}^W_t)$, $t\leq{T}$ and
introduce the set $C_n=\{\sup_{0\leq t\leq T}|U(t)$
$-Q^{(n)}(t)|>\delta\}$. From (\ref{3.15++}) we obtain that for any $n$,
\begin{eqnarray}\label{3.15++++}
&\sup_{\tau\in\mathcal{T}^W_{[0,T]}}
E^W\bigg(\bigg(Y^W({\tau})-\frac{Q^{(n)}({\tau})}{M({\tau})}\bigg)^{+}\bigg)\leq
\sup_{\tau\in\mathcal{T}^W_{[0,T]}}
E^W\bigg(\bigg(Y^W({\tau})-\\
&\frac{U({\tau})}{M({\tau})}\bigg)^{+}\bigg)
+\delta E^W[\sup_{0\leq t\leq T}\frac{1}{M({t})}]+E^W(\mathbb{I}_{C_n}\sup_{0\leq t\leq T}Y^W(t))\nonumber\\
&<R(x)+\frac{\epsilon}{2}+E^W(\mathbb{I}_{C_n}\sup_{0\leq t\leq T}Y^W(t)).\nonumber
\end{eqnarray}
By using the Doob inequality for the
continuous submartingale ${\{|U(t)-Q^{(n)}(t)|\}}_{t=0}^T$, it follows from (\ref{3.15+++})
that $\lim_{n\rightarrow\infty}P(C_n)=0$. This together with (\ref{3.15++++})
gives that for sufficiently large $n$, $\sup_{\tau\in\mathcal{T}^W_{[0,T]}}
E^W\bigg(\bigg(Y^W({\tau})-\frac{Q^{(n)}({\tau})}{M({\tau})}\bigg)^{+}\bigg)<R(x)+\epsilon$,
as required.
\end{proof}

Given a probability space
$(\Omega,\mathcal{F},P)$ consider a $c\grave{a}dl\grave{a}g$
stochastic process
$S=\{S_t:\Omega$\\
$\rightarrow{\mathbb{R}^d}\}_{t=0}^\Theta$, ($\Theta<\infty$). Denote by
$\mathcal{F}^S={\{\mathcal{F}^S_t\}}_{t=0}^\Theta$ the usual filtration
of $S$ i.e. the smallest right continuous filtration with respect to
which $S$ is adapted, and such that the $\sigma$--algebras contain the
null sets. Let $\mathcal{T}^S_{[0,\Theta]}$ be the set of all stopping
times with respect to $\mathcal{F}^S$ which take values in $[0,\Theta]$.

In \cite{MZ} the authors introduced the Meyer--Zheng (MZ) topology on the space $\mathbb{D}([0,\Theta]$\\$;\mathbb{R})$.
This topology will denoted by
$(\mathbb{D}([0,\Theta];\mathbb{R}),MZ)$. The MZ
topology is in fact the topology of convergence in measure, it is
weaker than the Skorohod topology, but for the MZ topology any
sequence of positive uniformly $L^1$--bounded supermartingales is
relatively compact (see \cite{MZ}). This fact together with the following lemma will be essential in the proof of Theorem \ref{thm2.1}.
\begin{lem}\label{lem3.3}
Let $(\Omega,\mathcal{F},P)$ be a probability space and
$S^{(n)}:\Omega\rightarrow\mathbb({D}[0,\Theta];$ $\mathbb{R}^d)$ be a
sequence of stochastic processes such that $S^{(n)}\rightarrow{S}$ a.s. on the space
$(\mathbb{D}([0,\Theta];\mathbb{R}^d),\mathcal{S})$. Assume that for any $n$,
${\{V^{(n)}(t)\}}_{t=0}^\Theta$ is a (one dimensional)
$c\grave{a}dl\grave{a}g$ positive supermartingale with respect to
the filtration $\mathcal{F}^{S_n}_{[0,\Theta]}$ and
$V^{(n)}\rightarrow{V}$ a.s. on the space $(\mathbb{D}([0,\Theta];\mathbb{R}),MZ)$ with respect to the MZ topology. Set
\begin{equation}\label{3.16}
Q(t)=E(V(t)|\mathcal{F}^S_t), \ \ t\leq{\Theta}.
\end{equation}
Then the process ${\{Q(t)\}}_{0\leq t< \Theta}$ is a $c\grave{a}dl\grave{a}g$ positive
supermartingale with respect to the filtration
$\mathcal{F}^{S}.$.
\end{lem}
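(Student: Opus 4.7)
The plan is to establish in succession: (a) $V(t)$ is a.s.\ nonnegative and integrable for every $t$, so that $Q(t) = E[V(t)\mid\mathcal{F}^S_t]$ is well defined; (b) the supermartingale inequality $E[Q(t)\mid\mathcal{F}^S_s] \le Q(s)$ holds for $s < t$ in a co-countable subset of $[0,\Theta)$; (c) $Q$ admits a c\`adl\`ag modification on $[0,\Theta)$. For (a), positivity of $V$ follows from positivity of the $V^{(n)}$ and the MZ limit, while Fatou's lemma, applied to a subsequence that is a.s.\ convergent at every $t$ outside the at most countable set of fixed-discontinuity times of $V$, together with the uniform $L^1$ bound $\sup_n EV^{(n)}(0) < \infty$ implicit in any MZ-relatively compact family of positive supermartingales, gives $\sup_t EV(t) < \infty$.

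The heart of the proof is (b). Let $D \subset [0,\Theta]$ be the at most countable set of times at which $V$ or some coordinate of $S$ has a fixed discontinuity of positive probability. For $s < t$ both in $[0,\Theta]\setminus D$ and for a cylindrical test functional $Z = g(S(t_1),\dots,S(t_m))$ with $g$ bounded continuous on $\mathbb{R}^{dm}$ and $0 \le t_1 \le \cdots \le t_m \le s$ with $t_i \notin D$, set $Z_n = g(S^{(n)}(t_1),\dots,S^{(n)}(t_m))$. The $\mathcal{F}^{S_n}$-supermartingale property of $V^{(n)}$ gives
\begin{equation*}
E[V^{(n)}(t)Z_n] \le E[V^{(n)}(s)Z_n].
\end{equation*}
Along a common subsequence, $V^{(n)}(r) \to V(r)$ for $r \in \{s,t\}$ and $S^{(n)}(t_i) \to S(t_i)$ a.s.\ (by MZ convergence at continuity points of $V$ and by Skorohod convergence at continuity points of $S$), so $Z_n \to Z$ a.s. Fatou controls the left side from below, while uniform integrability of $\{V^{(n)}(s)\}_n$ at $s \notin D$ (extracted from the Meyer--Zheng theory of marginals of MZ-convergent positive supermartingale sequences) controls the right side from above. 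A monotone class argument then lets $Z$ range over all bounded nonnegative $\mathcal{F}^S_s$-measurable random variables, yielding $E[V(t)\mid\mathcal{F}^S_s] \le E[V(s)\mid\mathcal{F}^S_s] = Q(s)$ and hence $E[Q(t)\mid\mathcal{F}^S_s] \le Q(s)$ by the tower property.

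For (c), $Q$ is thus an $\mathcal{F}^S$-supermartingale on the co-countable set $[0,\Theta)\setminus D$, and since $\mathcal{F}^S$ is right-continuous and complete, the classical regularization theorem for supermartingales provides a c\`adl\`ag modification on all of $[0,\Theta)$. The main technical obstacle is securing the uniform integrability of $\{V^{(n)}(s)\}_n$ needed to pass to the limit on the right-hand side in step (b); this is not automatic from the $L^1$ bound alone and requires invoking finer stability properties of MZ-convergent positive supermartingale sequences.
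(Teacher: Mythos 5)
Your overall architecture is close to the paper's: test the prelimit supermartingale inequality against bounded nonnegative functionals of $S^{(n)}$ measurable with respect to the past, pass to the limit, upgrade by a density/monotone-class argument, and regularize at the end. But the step you yourself flag as the crux is a genuine gap, not a technicality to be ``extracted from the Meyer--Zheng theory'': the uniform integrability of $\{V^{(n)}(s)\}_n$ at a fixed time $s$ is simply false in general for MZ-convergent sequences of positive supermartingales. Take $S^{(n)}=S=W$ a Brownian motion and $V^{(n)}(t)=\exp(nW(t)-n^2t/2)$. Each $V^{(n)}$ is a positive continuous martingale with $E V^{(n)}(s)=1$ for every $s$, yet $V^{(n)}\rightarrow 0$ in $dt\times dP$-measure, i.e.\ the MZ limit is $V\equiv 0$; the marginals at any fixed $s>0$ converge to $0$ a.s.\ while keeping expectation $1$, so they are not uniformly integrable. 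Consequently your chain $E[V(t)Z]\leq\liminf_n E[V^{(n)}(t)Z_n]\leq\liminf_n E[V^{(n)}(s)Z_n]$ cannot be closed: the last quantity can strictly exceed $E[V(s)Z]$ (take $Z=1$ in the example, where it equals $1$ versus $0$). The conclusion of the lemma happens to survive in this example, but your route to it does not, and no finer marginal-stability property of MZ limits will rescue it.

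The paper's proof circumvents the issue with a truncation that is the one idea missing from your proposal: it works throughout with $V^{(n)}\wedge c$, which is still a positive supermartingale for the same filtration and is uniformly bounded by $c$, so the limit passage on \emph{both} sides of the inequality $E[\phi(S^{(n)})(V^{(n)}(t)\wedge c)]\leq E[\phi(S^{(n)})(V^{(n)}(s')\wedge c)]$ is immediate with no uniform integrability whatsoever; the truncation is removed only at the very end, by letting $c\uparrow\infty$ in the already-established limit inequality via monotone convergence. (The paper also replaces your fixed-time marginal convergence along a subsequence by averaging over a small window $[s',s'+\epsilon]$ and then sending $\epsilon\downarrow 0$ using right continuity of the c\`adl\`ag limit; this is a stylistic alternative to your ``good times'' device, though note that MZ convergence only yields a.s.\ convergence of marginals for Lebesgue-a.e.\ $t$ along a subsequence, not for all $t$ outside the fixed-discontinuity set, so you would in any case need to approach $s$ from the right through good times, as the paper does with $s'\downarrow s$.) If you insert the truncation $\wedge\, c$ into your step (b) and postpone $c\uparrow\infty$ to the end, your argument becomes essentially the paper's.
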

\begin{proof}
First, let us show that  ${\{Q(t)\}}_{0\leq t< \Theta}$ is a supermartingale,
i.e. for any $s<t<\Theta$ and $D\in\mathcal{F}^S_s$
\begin{equation}\label{3.16+}
E\mathbb{I}_DV(s)\geq E\mathbb{I}_DV(t).
\end{equation}
Choose $s<s'<t$, $c>0$ and $0<\epsilon<\min(s'-s,\Theta-t)$.  Let
$\phi\in C((\mathbb{D}([0,\Theta];\mathbb{R}^d),\mathcal{S}))$ be a continuous bounded
function such that $\phi(x)$ depends only on the restriction of $x$
to the interval $[0,s']$. From the definition of the MZ topology we
obtain
\begin{eqnarray*}\label{3.17}
&\limsup_{n\rightarrow\infty}E\int_{u=0}^{\epsilon}
|\phi(S^{(n)})(V^{(n)}({s'+u})\wedge{c})-\phi(S)(V({s'+u})\wedge{c})|du\\
&\leq||\phi||_{\infty}\limsup_{n\rightarrow\infty}E\int_{u=0}^{\epsilon}
(|V^{(n)}({s'+u})-V({s'+u})|\wedge c)du+\nonumber\\
&c\limsup_{n\rightarrow\infty}
E\int_{u=0}^{\epsilon}|\phi(S)-\phi(S^{(n)})|du=0.\nonumber
\end{eqnarray*}
Thus,
\begin{equation}\label{3.18}
\lim_{n\rightarrow\infty}\frac{1}{\epsilon}E\int_{u=0}^{\epsilon}
\phi(S^{(n)})(V^{(n)}({s'+u})\wedge{c})du=\frac{1}{\epsilon}E\int_{u=0}^{\epsilon}
\phi(S)(V({s'+u})\wedge{c})du.
\end{equation}
Similarly,
\begin{equation}\label{3.19}
\lim_{n\rightarrow\infty}\frac{1}{\epsilon}E\int_{u=0}^{\epsilon}
\phi(S^{(n)})(V^{(n)}({t+u})\wedge{c})du=\frac{1}{\epsilon}E\int_{u=0}^{\epsilon}
\phi(S)(V({t+u})\wedge{c})du.
\end{equation}
For any $n$, ${\{V^{(n)}({\alpha})\wedge{c}\}}_{\alpha=0}^U$ is a
supermartingale with respect to $\mathcal{F}^{S^{(n)}}_{[0,\Theta]}$,
this together with (\ref{3.18}) and (\ref{3.19}) gives
\begin{equation*}
\frac{1}{\epsilon}E\int_{u=0}^{\epsilon}
\phi(S)(V({t+u})\wedge{c})du\leq
\frac{1}{\epsilon}E\int_{u=0}^{\epsilon}
\phi(S)(V({s'+u})\wedge{c})du.
\end{equation*}
By taking $\epsilon\downarrow{0}$ we obtain
$E\phi(S)(V({t})\wedge{c})\leq E\phi(S)(V({s'})\wedge{c})$. From density arguments
and the fact that $D\in\sigma\{S_u|u<s'\}$
it follows that
$E\mathbb{I}_D(V(s')\wedge c)\geq E\mathbb{I}_D(V(t)\wedge c)$ and by letting
$s'\downarrow{s}$ and $c\uparrow \infty$ we obtain (\ref{3.16+}). Finally, since the map
$t\rightarrow{EQ(t)}=EV(t)$ is right continuous we obtain (see \cite{LS})
that $Q$ has a $c\grave{a}dl\grave{a}g$ modification.
\end{proof}

In \cite{H} it was proved that
\begin{equation}\label{3.20---}
\begin{split}
(S^{\xi,n},M^{(n)})\Rightarrow
(S^W,M) \ \mbox{on} \ \mbox{the} \ \mbox{space} \ (\mathbb{D}([0,T];\mathbb{R}^d),\mathcal{S})\times
(\mathbb{D}([0,T];\mathbb{R}),\mathcal{S}).
\end{split}
\end{equation}
We use the notation $S^{(n)}\Rightarrow S$ to
indicate that the sequence $S^{(n)}$, $n\geq 1$ converges
weakly to $S$ (see \cite{B}).
We will use the concept "extended weak
convergence" which was introduced in \cite{A2} by Aldous. The original definition
was via prediction processes.  For the case where the
stochastic processes are considered with respect to their usual
filtration he proved that extended weak
convergence is equivalent to a more elementary condition which does
not require the use of prediction processes (see \cite{A2} Proposition
16.15). We will use the above condition as the
definition of extended weak convergence.

\begin{defn}\label{defn3.1}
A sequence
$S^{(n)}:\Omega_n\rightarrow\mathbb{D}([0,T];\mathbb{R}^d)$,
$n\geq{1}$ extended weak converges to a
stochastic process
$S:\Omega\rightarrow\mathbb{D}([0,T];\mathbb{R}^d)$ if for any $k$
and continuous bounded functions
$\psi_1,...,\psi_k\in{C((\mathbb{D}([0,T];\mathbb{R}^d),\mathcal{S}))}$
\begin{equation}\label{3.20--}
(S^{(n)},H^{(n,1)},...,H^{(n,k)})\Rightarrow (S,H^{(1)},...,H^{(k)})
\ \mbox{on} \ (\mathbb{D}([0,T];\mathbb{R}^{d+k}),\mathcal{S})
\end{equation}
where for any $t\leq{T}$, $1\leq i\leq{k}$ and $n\in\mathbb{N}$
\begin{equation}\label{3.20-}
H^{(n,i)}_t=E_n(\psi_i(S^{(n)})|\mathcal{F}^{S^{(n)}}_t),
n\in\mathbb{N}, \ \mbox{and} \ H^{(i)}=E(\psi_i(S)|\mathcal{F}^S_t)
\end{equation}
$E_n$ denotes the expectation with respect to the probability
measure on $\Omega_n$ and $E$ denotes the expectation with respect
to the probability measure on $\Omega$. We will denote extended weak
convergence by $S^{(n)}\Rrightarrow{S}$.
\end{defn}
\begin{lem}\label{lem3.4}
$S^{\xi,n}\Rrightarrow S^{W}.$
\end{lem}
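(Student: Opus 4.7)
The strategy is to reduce the extended convergence to a tractable class of test functions and then leverage the already-established Skorohod convergence $(S^{\xi,n}, M^{(n)}) \Rightarrow (S^W, M)$ from (\ref{3.20---}). By Skorohod's representation theorem I may assume this convergence is almost sure on a common probability space. Next, I cut down the class of admissible test functions: by linearity and a standard monotone class / Stone--Weierstrass argument it suffices to verify (\ref{3.20--}) for cylindrical functionals $\psi_i(x) = \prod_{l=1}^{m_i} f_{il}(x(t_{il}))$ with $f_{il} \in C_b(\mathbb{R}^d)$ and times $t_{il} \in [0,T]$. Since $S^W$ is a.s.\ continuous, each evaluation map $x \mapsto x(t)$ is continuous at $S^W$ almost surely, so such $\psi_i$ belong to $C_b((\mathbb{D}([0,T];\mathbb{R}^d), \mathcal{S}))$ up to a null set.

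For these cylindrical $\psi_i$ I would express the conditional expectations via Bayes' rule with respect to the martingale measure, obtaining
\begin{equation*}
H^{(n,i)}_t = M^{(n)}(t)\, \tilde{E}^\xi_n\!\left[\psi_i(S^{\xi,n})/M^{(n)}(T) \,\bigm|\, \mathcal{F}^\xi_{\lfloor nt/T\rfloor}\right],
\end{equation*}
and the analogous formula for $H^{(i)}_t$ with $\tilde{P}^W$, $S^W$ in place of $\tilde{P}^\xi_n$, $S^{\xi,n}$ (using $\mathcal{F}^{S^W}_t = \mathcal{F}^W_t$, which holds because $\sigma$ is invertible). Under the respective martingale measures both $S^{\xi,n}$ and $S^W$ are Markov, and the cylindrical form factorises: $H^{(i)}_t$ equals $\prod_{t_{il} \leq t} f_{il}(S^W(t_{il}))$ times a continuous function $G_i(t, S^W(t))$ of the current state (continuity coming from the smooth transition densities of the nondegenerate diffusion), and similarly for $H^{(n,i)}$ with some $G^{(n)}_i$. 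The convergence of the multinomial transition kernels to the Black--Scholes ones, a standard consequence of (\ref{3.20---}), gives $G^{(n)}_i \to G_i$ uniformly on compacta.

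Assembling these pieces: the a.s.\ Skorohod convergence of $S^{\xi,n}$ to the continuous $S^W$ yields a.s.\ convergence of each factor $f_{il}(S^{\xi,n}(t_{il}))$, while the uniform-on-compacts convergence of $G^{(n)}_i$ combined with tightness of $S^{\xi,n}(t)$ handles the remaining term, giving the joint Skorohod convergence (\ref{3.20--}) for cylindrical $\psi_i$. A Doob $\varepsilon$-approximation then extends the conclusion to arbitrary $\psi_i \in C_b((\mathbb{D}([0,T];\mathbb{R}^d),\mathcal{S}))$: any $L^1(\tilde{P}^W)$ approximation of $\psi_i$ by cylindrical functionals propagates, via Doob's maximal inequality applied to the bounded martingales $H^{(n,i)} - H^{(n,i')}$, to a uniform-in-$t$ approximation. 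The step I expect to be the most delicate is establishing Skorohod (rather than merely Meyer--Zheng or finite-dimensional) convergence of the $H^{(n,i)}$: the limit $H^{(i)}$ is continuous as a martingale on a Brownian filtration, so the task is to show that the discrete-time jumps of $H^{(n,i)}$ vanish uniformly, which for cylindrical $\psi_i$ reduces to controlling the one-step conditional variance of $\psi_i(S^{\xi,n})$ via the Markov structure, and is then carried through the density argument above.
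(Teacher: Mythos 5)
Your approach is genuinely different from the paper's. The paper's proof is essentially two lines: it passes to the log--prices $X^{(n)}=\ln S^{\xi,n}$ and $X=\ln S^W$, notes that these have independent increments with $X$ continuous, and invokes Corollary 2 of \cite{JS}, which says that for such processes ordinary weak convergence (already available from (\ref{3.20---})) upgrades automatically to extended weak convergence; the conclusion is then transported back through the map $x\mapsto\exp(x)$, which is a homeomorphism onto its range and hence leaves the generated filtrations unchanged. Your plan is instead a direct verification of (\ref{3.20--}) via the Markov property: reduce to cylindrical test functions, factor the conditional expectations through the transition semigroups, prove convergence of the semigroups, and close with a Doob--maximal--inequality density argument. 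That template is legitimate (it is close in spirit to how results like \cite{JS} are proved), and most of the ingredients you list are sound; the Bayes--rule detour through the martingale measure is an unnecessary complication, since $S^{\xi,n}$ and $S^W$ are already Markov under $P^{\xi}$ and $P^W$ respectively, but it is not an error.

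The problem is that the step you yourself flag as delicate --- joint convergence of $(S^{\xi,n},H^{(n,1)},\dots,H^{(n,k)})$ in the Skorohod topology, i.e.\ C--tightness of the conditional--expectation martingales $H^{(n,i)}$ rather than mere finite--dimensional or Meyer--Zheng convergence --- is exactly where the entire content of the lemma sits, and your proposal resolves it only with a one--sentence plan (``controlling the one--step conditional variance via the Markov structure''). Carrying this out requires a maximal--jump and modulus--of--continuity estimate for $H^{(n,i)}$ that is uniform over the cylindrical class and survives the $L^1$ density passage; nothing in the sketch supplies it. Similarly, the claim that $G^{(n)}_i\to G_i$ uniformly on compacta is stronger than what (\ref{3.20---}) gives directly (that is convergence from the one fixed initial point), although the multiplicative structure of (\ref{2.12}) does make it accessible. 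So the route could in principle be completed, but as written it has a genuine gap at its central step; the observation that the log--prices have independent increments is precisely what lets the paper discharge all of this work with a single citation, and you would save yourself the hardest estimates by adopting it.
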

\begin{proof}
Define the map
$G:(\mathbb{D}([0,T];\mathbb{R}^d),\mathcal{S})\rightarrow
(\mathbb{D}([0,T];\mathbb{R}^d),\mathcal{S})$ by
$(G(x_1,...,x_d))(t)$\\$=(\exp(x_1(t),...,\exp(x_d(t))$. Observe that $G$ is a continuous
map with continuous inverse (the inverse is defined only on
functions $(x_1,...,x_d)\in \mathbb{D}([0,T];\mathbb{R}^d)$ which
satisfy $\min_{1\leq i\leq d}\inf_{0\leq{t}\leq{T}}x_i(t)>0$).
Let  ${\{X(t)=(\ln S^{W}_1(t),...,\ln
S^{W}_d(t))\}}_{t=0}^T$ and ${\{X^{(n)}(t)=(\ln
S^{\xi,n}_1(t),...,\ln S^{\xi,n}_d(t))\}}_{t=0}^T$, $n\in\mathbb{N}$.
From (\ref{3.20---}) and the fact that $G$ has a continuous inverse it follows that
$X^{(n)}\Rightarrow X$.
For any $n$ the process $X^{(n)}$
has an independent
increments and the process $X$ is a continuous process with independent
increments. From Corollary 2 in \cite{JS} we obtain $X^{(n)}\Rrightarrow X$ and so (since $G$ is continuous)
$S^{\xi,n}\Rrightarrow S^{W}.$
\end{proof}

\section{Proof of main results}\label{sec4}\setcounter{equation}{0}
In this section we complete the proof of Theorem
\ref{thm2.1}. Fix $x$. We start with the proof of the
inequality $R(x)\leq\lim_{n\rightarrow\infty}R_n(x)$. Here and in the
sequel, for the sake of simplicity we will assume that indices have
been renamed so that the whole sequence converges.
Let $\pi_n\in\mathcal{A}^{\xi,n}(x)$, $n\in\mathbb{N}$
be a sequence such that
\begin{equation}\label{4.12}
\begin{split}
R_n(\pi_n)<R_n(x)+\frac{1}{n} \ \ \forall{n\in\mathbb{N}}.
\end{split}
\end{equation}
For any $n\in\mathbb{N}$ define the stochastic process
${\{Z^{(n)}(t)\}}_{t=0}^{2T}$ by $Z^{(n)}(t)=V^{\pi_n}_kM^{(n)}({t})$ for
$\frac{kT}{n}\leq t<\frac{(k+1)T}{n}$ and $k<n$, and
$Z^{(n)}(t)=V^{\pi_n}(n)M^{(n)}({T})$ for $t\geq T$.
From (\ref{2.13}) it follows
that ${\{Z^{(n)}(t)\}}_{t=0}^{2T}$ is a $c\grave{a}dl\grave{a}g$
martingale with respect to $P^{\xi}$ and the filtration
${\{\mathcal{F}^{S^{n,\xi}}_t\}}_{t=0}^{2T}$, where we set
$\mathcal{F}^{S^{n,\xi}}_t=\mathcal{F}^{S^{n,\xi}}_T$ for $t\geq T$.
From \cite{MZ} it follows that the sequence $Z^{(n)}$,
$n\in\mathbb{N}$ in tight on the space $(\mathbb{D}([0,T];\mathbb{R}),MZ)$.
We can extend all the processes in (\ref{3.20---}) to the interval
$[0,2T]$ be letting their paths
to be constants on the interval $[T,2T]$. From (\ref{3.20---}) we obtain that the sequence
$(S^{\xi,n},M^{(n)},Z^{(n)})$, $n\in\mathbb{N}$
it tight on the space
$(\mathbb{D}([0,2T];\mathbb{R}^d),\mathcal{S})\times
(\mathbb{D}([0,2T];\mathbb{R}),\mathcal{S})\times (\mathbb{D}([0,2T];\mathbb{R}),MZ)$. Thus there exists a subsequence such
that $(S^{\xi,n},M^{(n)},Z^{(n)})\Rightarrow (S^W,M,Z)$,
for some stochastic process $Z$
which satisfies $Z(0)\leq{x}$. Next, from the Skorohod
representation theorem (see \cite{D1}) it follows that without loss of generality we
can assume that there exists a probability space
$(\Omega,\mathcal{F},P)$ on which
\begin{equation}\label{4.14}
(S^{\xi,n},M^{(n)},Z^{(n)})\rightarrow (S^W,M,Z) \ \mbox{ a.s.}
\end{equation}
on the space $(\mathbb{D}([0,2T];\mathbb{R}^d),\mathcal{S})\times
(\mathbb{D}([0,2T];\mathbb{R}),\mathcal{S})\times (\mathbb{D}([0,2T];\mathbb{R}),MZ)$.
From Lemma \ref{lem3.3} it follows that the process
$Q(t):=E(Z(t)|\mathcal{F}^{S^W}_t)$, $t\leq{T}$ is a
$c\grave{a}dl\grave{a}g$ supermartingale. The process $V(t):=\frac{Q(t)\wedge \Gamma(t)}{M(t)}$, $t\leq T$ is a
$c\grave{a}dl\grave{a}g$ supermartingale of class $\mathcal{D}$ with respect to the martingale measure $\tilde{P}^W$
($\Gamma(t)$ was introduced after (\ref{3.15+})).
From
Doob's decomposition theorem and the martingale representation theorem we obtain that there exists
a portfolio $\pi\in\mathcal{A}(x)$ such that
\begin{equation}\label{4.14+}
\begin{split}
V^\pi(0)=V(0)\leq Q(0)=Z(0)\leq x \ \ \mbox{and} \ \ V^\pi(t)\geq V(t) \ \forall{t\leq T}.
\end{split}
\end{equation}
From \cite{MZ} there exists a subsequence $Z^{(n)}$ and a dense set
$I\subset{[0,T]}$, such that for any $t\in J$
\begin{equation}\label{4.15}
\lim_{n\rightarrow\infty} Z^{(n)}(t)=Z(t) \ \  \mbox{a.s.}
\end{equation}
Choose $\epsilon>0$. From Lemma \ref{lem3.2} we obtain that there exist
a stopping time $\tau$ which excepts a finite number of values
$\{t_1<t_2<...<t_m\}\subset I$
 such that
\begin{equation}\label{4.15+}
R(\pi)<\epsilon+E[(Y^W(\tau)-V^\pi({\tau}))^{+}].
\end{equation}
From Lemma 3.2 in \cite{D} and (\ref{4.14}) it follows that there exists a sequence
$\sigma_n\in\mathcal{F}^{S^{\xi,n}}_{[0,T]}$, $n\geq{1}$ of stopping times with values in the set
$\{t_1<t_2<...<t_m\}$ which satisfy
\begin{equation}\label{4.15++}
\begin{split}
\lim_{n\rightarrow\infty}\sigma_n=\tau \ \ \mbox{a.s.}
\end{split}
\end{equation}
Set
$\tau_n=\max\{k|kT/n\leq\sigma_n\}$, $n\geq{1}$. Observe that for any $k\leq n$, $\{\tau_n\leq k\}=\{\sigma_n<(k+1)T/n\}\in\mathcal{T}^\xi_k$
thus for any $n$, $\tau_n\in\mathcal{T}^\xi_n$.
Furthermore,
\begin{equation}\label{4.15++}
\begin{split}
|\frac{\tau_nT}{n}-\sigma_n|\leq\frac{1}{n} \ \ \mbox{and} \ \ Z^{(n)}({\sigma_n})=Z^{(n)}(\tau_nT/n) \ \ \forall{n}.
\end{split}
\end{equation}
From (\ref{2.3}) it follows that the random variables
$Y^{\xi,n}({\tau_n})$, $n\in\mathbb{N}$
are uniformly integrable. Thus, from Jensen's inequality and (\ref{4.12})--(\ref{4.15++}) it follows
\begin{eqnarray}\label{4.17-}
&R(x)\leq R(\pi)\leq\epsilon+E[(Y^W(\tau)-V^\pi({\tau}))^{+}]\leq \epsilon+\\
&E\bigg(\bigg(Y^W(\tau)-\frac{Q({\tau})}{M(\tau)}\bigg)^{+}\bigg)
\leq\epsilon+E\bigg(E\bigg(Y^W(\tau)-\frac{Z({\tau})}{M(\tau)}\bigg)^{+}\bigg|\mathcal{F}^{S^W}_\tau\bigg)\bigg)=\nonumber\\
&\epsilon+E\bigg(\bigg(Y^W(\tau)-\frac{Z({\tau})}{M(\tau)}\bigg)^{+}\bigg)=\epsilon+
 E\bigg(\lim_{n\rightarrow\infty}\bigg(Y^{\xi,n}({\tau_n})-\frac{
Z^{(n)}({\frac{\tau_nT}{n}})}{M^{(n)}({\frac{\tau_nT}{n}})}\bigg)^{+}\bigg)\nonumber\\
&=\epsilon+ \lim_{n\rightarrow\infty}E[(Y^{\xi,n}({\tau_n})-V^{\pi_n}({\tau_n}))^{+}]
\leq \epsilon+\lim_{n\rightarrow\infty} R_n(x).\nonumber
\end{eqnarray}
Since $\epsilon>0$ was arbitrary we conclude that $R(x)\leq \lim_{n\rightarrow\infty} R_n(x)$.

Next, we show that $R(x)\geq\lim_{n\rightarrow\infty}R_n(x)$.
Choose $\epsilon>0$. From Lemma \ref{lem3.3-} it follows that there
exists $\psi\in C((\mathbb{D}([0,T];\mathbb{R}^d),\mathcal{S}))$ such that the
stochastic process $H(t):=E^W(\psi_i(S^W)|\mathcal{F}^{S^W}_t)$,
$t\leq{T}$ satisfies $H(0)<x$ and
\begin{equation}\label{4.18}
R(x)>\sup_{\tau\in\mathcal{T}^{S^W}_{[0,T]}}E^W\bigg(\bigg(Y^W({\tau})-\frac{H({\tau})}{M({\tau})}\bigg)^{+}\bigg)-\epsilon.
\end{equation}
For any $n$ define the stochastic process
$H^{(n)}(t)=E^{\xi}(\psi(S^{n,\xi})|\mathcal{F}^{S^{n,\xi}}_t)$,
$t\leq{T}$. From Lemma \ref{lem3.4} we obtain
\begin{equation}\label{4.20}
\begin{split}
(S^{\xi,n},H^{(n)})\Rightarrow (S^W,H)  \ \mbox{on} \ \mbox{the} \ \mbox{space} \ (\mathbb{D}([0,T];\mathbb{R}^d),\mathcal{S})\times
(\mathbb{D}([0,T];\mathbb{R}),\mathcal{S}).
\end{split}
\end{equation}
Since the process $H$ is continuous then
$\lim_{n\rightarrow\infty}H^{(n)}(0)=H(0)$. Thus, we will assume that
$n$ is sufficiently large such that $H^{(n)}(0)\leq{x}$. Observe that
the process $\frac{H^{(n)}({kT/n})}{M^{(n)}({kT/n})}$,
$0\leq{k}\leq{n}$ is a martingale with respect to
$\tilde{P}^{\xi}_n$ and the filtration $\{\mathcal{F}^\xi_k\}_{k=0}^n$, thus (since the multinomial markets are complete) there exists
$\pi'_n\in\mathcal{A}^{\xi,n}(x)$ such that $V^{\pi'_n}(k)=\frac{H^{(n)}({kT/n})}{M^{(n)}({kT/n})}$, $k\leq n$.
We obtain that for any $n$
there exists a stopping time
$\sigma_n\in\mathcal{T}^\xi_n$ which satisfies
\begin{eqnarray}\label{4.21}
&E^{\xi}
\bigg(\bigg(Y^{\xi,n}({\sigma_n})-\frac{Z^{(n)}(\frac{\sigma_nT}{n})}{M^{(n)}(\frac{\sigma_nT}{n})}\bigg)^{+}\bigg)>
\sup_{\tau\in\mathcal{T}^\xi_n} E^{\xi}
\bigg(\bigg(Y^{\xi,n}({\tau})-\frac{Z^{(n)}(\frac{\tau T}{n})}{M^{(n)}(\frac{\tau T}{n})}\bigg)^{+}\bigg)\\
&-\frac{1}{n}\geq R_n(\pi'_n)-\frac{1}{n}\geq
R_n(x)-\frac{1}{n}.\nonumber
\end{eqnarray}
From (\ref{3.20---}) and (\ref{4.20}) the sequence
$(S^{\xi,n},H^{(n)},M^{(n)},\sigma_nT/n)$ is tight on the space $(\mathbb{D}([0,T];\mathbb{R}^d),\mathcal{S})\times
(\mathbb{D}([0,T];\mathbb{R}),\mathcal{S})\times(\mathbb{D}([0,T];\mathbb{R}),\mathcal{S})\times [0,T]$. Thus there exists
a subsequence such that
$(S^{\xi,n},H^{(n)},H^{(n)},\sigma_nT/n)\Rightarrow (S^W,H,M,\nu)$ for some random variable $\nu\leq T$.
From the Skorohod representation theorem we can assume that there
exists a probability space $(\Omega,\mathcal{F},P)$ on which
\begin{equation}\label{4.22}
(S^{\xi,n},H^{(n)},M^{(n)},\sigma_nT/n)\rightarrow (S^W,Z,M,\nu) \ \
\mbox{a.s.}
\end{equation}
on the space $(\mathbb{D}([0,T];\mathbb{R}^d),\mathcal{S})\times
(\mathbb{D}([0,T];\mathbb{R}),\mathcal{S})\times(\mathbb{D}([0,T];\mathbb{R}),\mathcal{S})\times [0,T]$. Observe that the joint distribution of $(S^W,Z,M)$ in (\ref{4.22})
remains as the original one. From Lemma 3.3 in \cite{D} it follows that for any $t\leq T$,
$\{\nu\leq{t}\}$ and $\mathcal{F}^{S^W}_T$ are
conditionally independent given $\mathcal{F}^{S^W}_t$, and for any
uniformly integrable $c\grave{a}dl\grave{a}g$ stochastic process
${\{\Phi(t)\}}_{t=0}^T$ adapted to the filtration
$\mathcal{F}^{S^W}_{[0,T]}$
\begin{equation}\label{4.23}
E\Phi({\nu})\leq\sup_{\tau\in\mathcal{T}^S_{[0,T]}}E\Phi({\tau}).
\end{equation}
Finally, by using (\ref{4.23}) for the process $\Phi(t):=(Y^W({t})-\frac{H({t})}{M({t})})^{+}$,
(\ref{4.18}) and
(\ref{4.21})--(\ref{4.22}) we obtain
\begin{eqnarray*}
&\lim_{n\rightarrow\infty}R_n(x)\leq \lim_{n\rightarrow\infty}E^{\xi}
\bigg(\bigg(Y^{\xi,n}({\sigma_n})-\frac{Z^{(n)}(\frac{\sigma_n T}{n})}{M^{(n)}(\frac{\sigma_n T}{n})}\bigg)^{+}\bigg)=\\
&E\bigg(\lim_{n\rightarrow\infty}\bigg(Y^{\xi,n}({\sigma_n})-\frac{Z^{(n)}(\frac{\sigma_n T}{n})}{M^{(n)}(\frac{\sigma_n T}{n})}
\bigg)^{+}\bigg)=
E^W\bigg(\bigg(Y^W({\nu})-\frac{H({\nu})}{M({\nu})}\bigg)^{+}\bigg)< R(x)+\epsilon
\end{eqnarray*}
and the proof is completed.
\begin{rem}
An interesting question is whether Theorem \ref{thm2.1} is valid for
game options which were introduced in \cite{K}. Let
$F,G:[0,T]\times(\mathbb{D}([0,T];\mathbb{R}^d),\mathcal{S})\rightarrow\mathbb{R}_{+}$
such that $F\leq G$ satisfy the assumptions after (\ref{2.2}). Set
\begin{eqnarray}\label{4.30}
&H^W(t,s)=G(t,S^W)\mathbb{I}_{t<s}+F(s,S^W)\mathbb{I}_{s\leq t}, \ \ t,s\in [0,T] \ \mbox{and}\\
&H^{\xi,n}(k,l)=G(\frac{kT}{n},S^{\xi,n})\mathbb{I}_{k<l}+F(\frac{lT}{n},S^{\xi,n})\mathbb{I}_{l\leq k}, \ \ n\in\mathbb{N},\ 0\leq k,l\leq n.\nonumber
\end{eqnarray}
The terms $H^W(t,s)$ and $H^{\xi,n}(k,l)$ are the payoff functions for the BS model
and the $n$--step multinomial model, respectively.
For game options the shortfall risk is defined by (see \cite{DK1})
\begin{eqnarray}\label{4.31}
&R^{(g)}(x)=\inf_{\pi\in\mathcal{A}^W(x)}\inf_{\sigma\in\mathcal{T}^W_{[0,T]}}\sup_{\tau\in\mathcal{T}^W_{[0,T]}}
E^W[(H^W(\sigma,\tau)-V^{\pi}({\sigma\wedge\tau}))^{+}] \\
& \mbox{and} \
R^{(g)}_n(x)=\inf_{\pi\in\mathcal{A}^{\xi,n}(x)}\min_{\sigma\in\mathcal{T}^{\xi}_n}\max_{\tau\in\mathcal{T}^{\xi}_n}E^\xi
[(H^{\xi,n}(\sigma,\tau)-V^{\pi}({\sigma\wedge\tau}))^{+}].\nonumber
\end{eqnarray}
The question is whether the equality $R^{(g)}(x)=\lim_{n\rightarrow\infty}R^{(g)}_n(x)$ holds true.
Following the proof above it can be shown that
$R^{(g)}(x)\geq\limsup_{n\rightarrow\infty}R^{(g)}_n(x)$. The inequality
$R^{(g)}(x)\leq\liminf_{n\rightarrow\infty}R^{(g)}_n(x)$ is more difficult to
prove because of the additional $\inf$ (in formula (\ref{4.31}))
which destroys the convexity that was used in (\ref{4.17-}) (by
applying Jensen's inequality). At present it is not clear whether
the weak convergence approach can be applied here.
\end{rem}

\section{Analysis of the multinomial models}\label{sec5}\setcounter{equation}{0}
In this section we provide a dynamical programming algorithm for the shortfall risks and the corresponding optimal portfolios
in the multinomial models. Similar analysis was done in \cite{DK1}
for game options in multinomial markets with one risky asset.
\begin{defn}\label{defn5.1}
A function $\psi:{\mathbb{R}_{+}}\rightarrow{\mathbb{R}_{+}}$ is
 a piecewise linear function vanishing at $\infty$ if there exists a natural
 number $n$, such that
\begin{equation}\label{5.1}
\psi(y)=\sum_{i=1}^n \mathbb{I}_{[a_i,a_{i+1})}(c_iy+d_i)
\end{equation}
where $c_1,...,c_n,d_1,...,d_n\in\mathbb{R}$ and $a_1<a_2<...<a_{n+1}<\infty$.
\end{defn}

Let $J=\{v^{(1)},...,v^{(d+1)}\}\subset \mathbb{R}^d$ such that
\begin{equation}\label{5.1+}
\begin{split}
span\{v^{(1)},...,v^{(d+1)}\}=\mathbb{R}^d \ \mbox{and} \ \exists p_1,...,p_{d+1}>0,  \
 \sum_{i=1}^{d+1} p_i v^{(i)}=0.
\end{split}
\end{equation}
Define the set $K_J=\{u\in\mathbb{R}^{d}|\langle u,v^{(i)}\rangle\geq -1, \ i=1,...,d+1\}$.
Observe that $K_J$ is a compact convex set.
\begin{lem}\label{lem5.2}
Let $\psi_1,...,\psi_{d+1}:{\mathbb{R}_{+}}\rightarrow{\mathbb{R}_{+}}$ be
continuous, non increasing and piecewise linear functions vanishing at
$\infty$. Define $\psi:{\mathbb{R}_{+}}\rightarrow{\mathbb{R}_{+}}$
by
\begin{equation}\label{5.3}
\psi(y)=\min_{u\in K_J}\sum_{i=1}^{d+1}\psi_i(y(1+\langle u,v^{(i)}\rangle)).
\end{equation}
Then $\psi$ is continuous, non increasing and piecewise
linear function vanishing at $\infty$.
\end{lem}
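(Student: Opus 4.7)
The plan is to reformulate (\ref{5.3}) as a parametric linear program whose value function is piecewise linear in $y$, and then handle continuity, monotonicity, and vanishing at infinity by simpler arguments. The key observation is that in the original $u$-variables the objective $\Phi(y,u):=\sum_i\psi_i(y(1+\langle u,v^{(i)}\rangle))$ is bilinear in $(y,u)$, but a linear change of coordinates will move all $y$-dependence into the right-hand side of a single equality constraint, reducing the problem to a classical parametric LP.

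\emph{Reformulation.} For $y>0$, introduce $\eta_i=y(1+\langle u,v^{(i)}\rangle)$, $i=1,\dots,d+1$. The constraint $u\in K_J$ becomes $\eta_i\ge 0$, while the identity $\sum_i p_iv^{(i)}=0$ from (\ref{5.1+}) forces $\sum_i p_i\eta_i=Py$ with $P:=\sum_i p_i$. Since $v^{(1)},\dots,v^{(d+1)}$ span $\mathbb{R}^d$, the map $u\mapsto(\langle u,v^{(i)}\rangle)_i$ is a linear isomorphism from $\mathbb{R}^d$ onto the hyperplane $\{\xi\in\mathbb{R}^{d+1}:\sum p_i\xi_i=0\}$, so for each fixed $y>0$, $u\mapsto\eta$ is a bijection between $K_J$ and
\begin{equation*}
\Delta_y:=\Bigl\{\eta\in\mathbb{R}_+^{d+1}:\sum_{i=1}^{d+1} p_i\eta_i=Py\Bigr\}.
\end{equation*}
Therefore (\ref{5.3}) becomes $\psi(y)=\min_{\eta\in\Delta_y}\sum_i\psi_i(\eta_i)$, which extends to $y=0$ via $\Delta_0=\{0\}$ by continuity.

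\emph{Continuity, monotonicity, and vanishing at infinity.} Continuity of $\psi$ follows from joint continuity of $(y,u)\mapsto\Phi(y,u)$ and compactness of $K_J$. For monotonicity, let $0\le y_1<y_2$ and pick $u^*$ attaining $\psi(y_1)$; since $1+\langle u^*,v^{(i)}\rangle\ge 0$, each argument is non-decreasing in $y$, and the monotonicity of the $\psi_i$ gives $\psi(y_2)\le\Phi(y_2,u^*)\le\Phi(y_1,u^*)=\psi(y_1)$. For vanishing at infinity, by (\ref{5.1}) there exists $R$ with $\psi_i\equiv 0$ on $[R,\infty)$ for every $i$, so taking $u=0\in K_J$ gives $\psi(y)\le\sum_i\psi_i(y)=0$ for $y\ge R$.

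\emph{Piecewise linearity.} Write $\psi_i(z)=c_{i,j}z+d_{i,j}$ on its $j$-th piece $[a_{i,j},a_{i,j+1})$, with the final piece corresponding to $\psi_i\equiv 0$, and partition $\mathbb{R}_+^{d+1}$ into the finitely many closed boxes $B_j=\prod_i[a_{i,j_i},a_{i,j_i+1}]$ indexed by tuples $j=(j_1,\dots,j_{d+1})$. On $B_j$ the objective $\sum_i\psi_i(\eta_i)$ coincides with an affine function $L_j(\eta)$, so
\begin{equation*}
\psi(y)=\min_j\,\min_{\eta\in\Delta_y\cap B_j}L_j(\eta),
\end{equation*}
where empty feasible sets are discarded. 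For each $j$ the inner problem is a parametric LP with $y$-independent objective and feasible region $\Delta_y\cap B_j$ whose defining inequalities are $y$-independent except for the single equality $\sum p_i\eta_i=Py$; by LP duality its value function $\max_\lambda (Py)\lambda+(\text{const})$, over the $y$-independent dual polytope, is convex piecewise linear in $y$ with finitely many breakpoints. A minimum over the finite index set $\{j\}$ then expresses $\psi$ as a piecewise linear (in general non-convex) function of $y$, which together with non-negativity and the vanishing on $[R,\infty)$ matches (\ref{5.1}). The main obstacle is precisely this piecewise-linearity step: in the original variables the objective is bilinear in $(y,u)$, so naive sensitivity analysis would only yield a piecewise quadratic value function, and the change of variables to $\eta$ is essential to obtain a true parametric LP with piecewise linear value.
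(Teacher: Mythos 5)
Your proof is correct, but the decisive step -- piecewise linearity -- is handled by a genuinely different mechanism than in the paper. The paper stays in the $u$-variables: for each fixed $y$ it partitions $K_J$ into the subpolytopes $K^{(y)}_{\beta}$ on which the objective is affine, observes that the minimum over each is attained at an extreme point, and shows that all such extreme points lie in a finite set $L^{(y)}$ whose elements have coordinates of the form $c_k+e_r/y$; substituting back, each candidate value $\psi_i\bigl(y(1+\langle v^{(i)},u\rangle)\bigr)$ becomes affine in $y$, so $\psi$ takes values in a finite family of affine functions and piecewise linearity follows from continuity. You instead perform the change of variables $\eta_i=y(1+\langle u,v^{(i)}\rangle)$, which (correctly, using that the $v^{(i)}$ span $\mathbb{R}^d$ and $\sum p_iv^{(i)}=0$) identifies $K_J$ with the simplex $\Delta_y$ and concentrates all $y$-dependence into the right-hand side of one equality constraint; piecewise linearity then comes from classical parametric-LP sensitivity via duality. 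The two arguments are cousins -- both ultimately reduce to finitely many basic solutions depending affinely on $y$ -- but yours replaces the paper's explicit bookkeeping of $1/y$-dependent extreme points with a cleaner, off-the-shelf duality statement, at the cost of invoking parametric LP theory and of one small point worth making explicit: that for each box index $j$ the set of $y$ with $\Delta_y\cap B_j\neq\emptyset$ is an interval (it is, since all constraints are jointly affine in $(y,\eta)$), so that the outer minimum over $j$ of functions with different domains is still piecewise linear with finitely many breakpoints. Your treatments of continuity (Berge-type, versus the paper's explicit Lipschitz bound), monotonicity, and vanishing at infinity (taking $u=0$) are essentially the paper's.
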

\begin{proof}
Clearly $\psi$ is a non increasing function. There exists a natural number $n$ such that
\begin{equation}\label{5.5}
\psi_i(y)=\sum_{j=1}^n \mathbb{I}_{[a_j,a_{j+1})}(c^{(i)}_jy+d^{(i)}_j),
\ i=1,...,d+1
\end{equation}
where $c^{(i)}_j,d^{(i)}_j\in{\mathbb{R}}$ and
$0=a_1<a_2<...<a_{n+1}<\infty$. Denote
$I_k=[a_k,a_{k+1})$, $k=1,...,n$ and $I_{n+1}=[a_{n+1},\infty)$.
Set
$\lambda_i=1+\sup_{u\in K}\langle u,v^{(i)}\rangle$, $i\leq d+1$.
Notice that for any $y_1,y_2\in\mathbb{R}_{+}$
\begin{eqnarray}\label{5.5+}
&|\psi(y_1)-\psi(y_2)|\leq\sum_{i=1}^{d+1}\sup_{u\in K}|\psi_i(y_1(1+\langle u,v^{(i)}\rangle))
-\psi_i(y_2(1+\langle u,v^{(i)}\rangle))| \\
&\leq|y_1-y_2|\sum_{i=1}^{d+1}\lambda_i \max_{1\leq j\leq n}|c^{(i)}_j|.
\nonumber
\end{eqnarray}
Thus $\psi$ is a continuous function.
Next, we prove that $\psi$ is a piecewise linear function. Fix $y>0$ and
introduce the set
 $L_y=\{\frac{a_1}{y}-1,....,\frac{a_{n+1}}{y}-1\}$.
For any $1\leq\alpha\leq d+1$ and
 $\beta\in \{1,...,n+1\}^{d+1}$ define the sets
 $L^{(y)}_{\alpha}=
\{w\in\mathbb{R}^{d}|\langle v^{(i)},w\rangle\in L_y, i\in \{1,...,d+1\}\setminus\{\alpha\}\}$ and
$K^{(y)}_{\beta}=\{u\in\mathbb{R}^d|y(1+\langle v^{(j)},u\rangle)\in I_{\beta_j}, \ \forall{j\leq d+1} \}$.
Set $L^{(y)}=\bigcup_{\alpha=1}^{d+1}L^{(y)}_{\alpha}$.
There exists a finite sequence of real numbers $c_1,...,c_m,e_1,...,e_m$ (which does not depend on $y$)
such that any $v\in L^{(y)}$ is of the form
$v=(c_{k_1}+\frac{e_{r_1}}{y},...,c_{k_d}+\frac{e_{r_d}}{y})$,
$k_1,...,k_d,r_1,...,r_d\in\{1,...,m\}$.
Notice that for any $\beta$, $K^{(y)}_\beta\subset K_J$ is a compact convex set.
Furthermore, the extreme points of $K^{(y)}_{\beta}$
are in $L^{(y)}$. For each $\beta\in \{1,...,n+1\}^{d+1}$ the function
$\psi^{(y)}:K^{(y)}_\beta\rightarrow\mathbb{R}_{+}$ which given by
$\psi^{(y)}(u)=\sum_{i=1}^{d+1}
\psi_i(y(1+\langle v^{(i)},u\rangle))$, is a convex function.
Since $\bigcup_{\beta\in\{1,...,n+1\}^{d+1}}K^{(y)}_{\beta}=K_J$,
we obtain
\begin{eqnarray}\label{5.6}
&\psi(y)=\min_{\beta\in \{1,...,n+1\}^{d+1}}\min_{u\in K^{(y)}_{\beta}}\psi^{(y)}(u)=\\
&\min_{\beta\in \{1,...,n+1\}^{d+1}}\min_{u\in K^{(y)}_{\beta}\bigcap L^{(y)}}\psi^{(y)}(u)=
\min_{u\in K_J\bigcap L^{(y)}}\psi^{(y)}(u).\nonumber
\end{eqnarray}
Thus
there exists a finite sequence of real numbers $f_1,...,f_{\tilde{m}},g_1,...,g_{\tilde{m}}$
such that for any $y>0$,
\begin{equation}\label{5.7}
\psi(y)=f_i y+g_i
\end{equation}
for some $i$ (which depends on $y$). This together with
the inequality $\psi(y)\leq\sum_{i=1}^{d+1}\psi_i(y)$
and the fact that $\psi$ is a continuous function
gives that $\psi$ is a piecewise linear function vanishing at
$\infty$.
\end{proof}

Next, fix $n$ and consider the $n$--step multinomial model.
For any $\pi\in\mathcal{A}^{\xi,n}$ define a sequence of
random variables ${\{U^\pi(k)\}}_{k=0}^n$ by
\begin{eqnarray}\label{5.9}
&U^\pi(n)=({Y}^{\xi,n}(n)-{V}^\pi(n))^+, \ \mbox{and} \ \mbox{for} \ k<n\\
&U^\pi(k)=\max(E^{\xi}(U^{\pi}({k+1})
|\mathcal{F}^{\xi}_k),({Y}^{\xi,n}(k)-V^\pi(k))^{+}).\nonumber
\end{eqnarray}
Applying standard results for optimal stopping (see \cite{PS})
for the process $({Y}^{\xi,n}(k)-{V}^{\pi}(k))^{+}$, $k=0,1,...,n$
we obtain
\begin{equation}\label{5.10}
U^\pi(0)=
\max_{\tau\in{\mathcal{T}^\xi_{n}}}E^{\xi}[(Y^{\xi,n}({\tau})-{V}^{\pi}({\tau}))^+]=R_n(\pi).
\end{equation}
Set,
\begin{eqnarray}\label{5.10+}
&w^{(i)}=\sqrt{d+1}(A_{i1},...,A_{id}), \ w^{n,i}=\frac{T}{n}b+\sqrt{\frac{T}{n}}w^{(i)}\sigma^{*}, \ \  i\leq d+1\\
&J=\{w^{(1)},...,w^{(d+1)}\} \ \mbox{and} \ J_n=\{w^{n,1},...,w^{n,d+1}\}\nonumber
\end{eqnarray}
where the matrix $A$ and the vector $b$ were introduced in Section 2.
\begin{defn}\label{defn5.2}
Let $0\leq k<n$ and $X$ be a nonnegative $\mathcal{F}^{\xi}_k$--measurable
random variable. Define the set
\begin{eqnarray}\label{5.10++}
&\mathcal{A}^{(n)}_k(X)=\bigg\{Y|Y=X\bigg(1+\bigg\langle \rho,\frac{T}{n}b+
{\sqrt\frac{T}{n}}\xi^{(k+1)}\sigma^{*} \bigg\rangle\bigg), \\
&\rho:\Omega_{\xi}\rightarrow K_{J_n} \ \mbox{is} \
\mathcal{F}^\xi_k-\mbox{measurable} \bigg\}.\nonumber
\end{eqnarray}
\end{defn}

Notice that if
$V^\pi(k)=X$ and $V^\pi(k+1)=Y$ for some $\pi=(\gamma(1),...,\gamma(n))\in\mathcal{A}^{\xi,n}$ and $k<n$
then from (\ref{2.12}) and (\ref{2.15}),
$Y=X(1+\langle \rho,\frac{T}{n}b+
\sqrt{\frac{T}{n}}\xi^{(k+1)}\sigma^{*}\rangle)$
where  $\rho=\frac{\mathbb{I}_{X>0}}{X}(\gamma_1(k+1)S^{\xi,n}_1(\frac{(k+1)T}{n}),...,
\gamma_d(k+1)S^{\xi,n}_d(\frac{(k+1)T}{n}))$.
Clearly, if $X=0$ then ($\pi$ is \textit{admissible}) $Y=0$. Since we require
$Y\geq 0$ to be satisfied for all possible values of $\xi^{(k+1)}$ then
in view of independency of $\rho$ and $\xi^{(k+1)}$ we conclude that
$\mathcal{A}^{(n)}_k(X)$ is the set of all possible portfolio values at time
$k+1$ provided the portfolio value at time $k$ is $X$.

For any $0\leq k\leq n$ let $\phi^{(n)}_k:J^k\rightarrow\mathbb{R}_{+}$ such that
\begin{equation}\label{5.11-}
\phi^{(n)}_k(\xi^{(1)},...,\xi^{(k)})=Y^{\xi,n}(k).
\end{equation}
Define a sequence of functions
$H^{(n)}_k:\mathbb{R}_{+}\times J^k\rightarrow \mathbb{R}_{+},\, k=0,1,...,
n$ by the following backward relations. For any $u^{(1)},...,u^{(n)}\in J$ and $y\in\mathbb{R}_{+}$
\begin{eqnarray}\label{5.11}
&H^{(n)}_n(y,u^{(1)},...,u^{(n)})=(\phi^{(n)}_n(u^{(1)},...,u^{(n)})-y)^+ \  \ \mbox{and}\\
&H^{(n)}_k(y,u^{(1)},...,u^{(k)})=\max\bigg(\phi^{(n)}_n(u^{(1)},...,u^{(k)})-y)^{+}, \ \
\frac{1}{d+1} \inf_{u\in {K_{J_n}}}\nonumber\\
&\sum_{i=1}^{d+1} H^{(n)}_{k+1}(y(1+\langle u,\frac{T}{n}b+\sqrt{\frac{T}{n}}w^{(i)}\sigma^{*}\rangle),
u^{(1)},...,u^{(k)},w^{(i)})\bigg) \ \mbox{for}  \ k<n.\nonumber
\end{eqnarray}
Observe that $J_n$ (for sufficiently large $n$) satisfies
(\ref{5.1+}). Thus from Lemma \ref{lem5.2} it follows (by backward induction) that for any $k\leq n$ and
$u^{(1)},...,u^{(k)}\in J$,
$H^{(n)}_k(\cdot,u^{(1)},...,u^{(k)})$ is continuous, non increasing and piecewise
linear function vanishing at $\infty$. These facts allow us to define the functions
$\{h^{(n)}_k:\mathbb{R}_{+}\times J^k\rightarrow K_{J_n}\}_{k=0}^{n-1}$
by
\begin{eqnarray}\label{5.11+}
h^{(n)}_k(y,u^{(1)},...,u^{(k)})=argmin_{u\in K_{J_n}}
\sum_{i=1}^{d+1}H^{(n)}_{k+1}(y(1+\langle u,\\
\frac{T}{n}b+\sqrt{\frac{T}{n}}w^{(i)}\sigma^{*}\rangle),
u^{(1)},...,u^{(k)},w^{(i)}).\nonumber
\end{eqnarray}
Namely,
\begin{eqnarray}\label{5.12}
&\min_{u\in {K_{J_n}}}\sum_{i=1}^{d+1}H^{(n)}_{k+1}(y(1+\langle u,\frac{T}{n}b+\sqrt{\frac{T}{n}}w^{(i)}\sigma^{*}\rangle),
u^{(1)},...,u^{(k)},w^{(i)})=\\
&\sum_{i=1}^{d+1}H^{(n)}_{k+1}(y(1+\langle h^{(n)}_k(y,u^{(1)},...,u^{(k)}),\frac{T}{n}b+\sqrt{\frac{T}{n}}w^{(i)}\sigma^{*}\rangle),
u^{(1)},...,u^{(k)},w^{(i)})\nonumber
\end{eqnarray}
for any $y\in\mathbb{R}_{+}$ and $u^{(1)},...,u^{(k)}\in J$.

Let $x>0$ be an initial capital. Define $\tilde{\pi}=\tilde\pi(n,x)\in\mathcal{A}^{\xi,n}(x)$ by
\begin{eqnarray}\label{5.13}
&V^{\tilde\pi}(0)=x, \ \mbox{and} \ \mbox{for} \ 0\leq k<n\\
& V^{\tilde\pi}(k+1)=
V^{\tilde\pi}(k)(1+\langle h^{(n)}_k(V^{\tilde\pi}(k),\xi^{(1)},...,\xi^{(k)}),\frac{T}{n}b+\sqrt{\frac{T}{n}}\xi^{(k+1)}\sigma^{*}\rangle).
\nonumber
\end{eqnarray}
\begin{thm}
For any $n\in\mathbb{N}$ and $x>0$
\begin{equation}\label{5.14}
R_n(x)=R_n(\tilde\pi(n,x))=H^{(n)}_0(x).
\end{equation}
\end{thm}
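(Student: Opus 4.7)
The plan is to prove by backward induction on $k$ that for every $\pi\in\mathcal{A}^{\xi,n}$ and every $0\le k\le n$,
\[
U^\pi(k)\ \ge\ H^{(n)}_k\bigl(V^\pi(k),\xi^{(1)},\ldots,\xi^{(k)}\bigr)\quad\mbox{a.s.},
\]
with equality throughout when $\pi=\tilde\pi(n,x)$. Evaluating at $k=0$ and combining with (\ref{5.10}) will yield $R_n(\pi)\ge H^{(n)}_0(V^\pi(0))\ge H^{(n)}_0(x)$ for every $\pi\in\mathcal{A}^{\xi,n}(x)$, where the second inequality uses that $H^{(n)}_0(\cdot)$ is nonincreasing in its first argument (this monotonicity propagates through the recursion (\ref{5.11}) by Lemma \ref{lem5.2}). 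Taking the infimum over $\pi$ gives $R_n(x)\ge H^{(n)}_0(x)$, while the case of equality for $\tilde\pi(n,x)$ supplies the matching upper bound.

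The base case $k=n$ is immediate from (\ref{5.9}) and (\ref{5.11}). For the inductive step, I would invoke the correspondence recorded after Definition \ref{defn5.2}: the wealth increment of any admissible $\pi$ can be written as
\[
V^\pi(k+1)=V^\pi(k)\bigl(1+\langle\rho_k,\tfrac{T}{n}b+\sqrt{T/n}\,\xi^{(k+1)}\sigma^*\rangle\bigr)
\]
for some $\mathcal{F}^\xi_k$-measurable $\rho_k$ with values in $K_{J_n}$; indeed, the requirement $V^\pi(k+1)\ge 0$ \emph{for every} possible realization $w^{(i)}$ of $\xi^{(k+1)}$ is precisely $\langle\rho_k,w^{n,i}\rangle\ge -1$ for all $i$, i.e.\ $\rho_k\in K_{J_n}$. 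Since $\xi^{(k+1)}$ is independent of $\mathcal{F}^\xi_k$ and uniform on $\{w^{(1)},\ldots,w^{(d+1)}\}$, the inductive hypothesis and Jensen-free direct computation give
\[
E^\xi\bigl(U^\pi(k+1)\,\big|\,\mathcal{F}^\xi_k\bigr)\ \ge\ \frac{1}{d+1}\sum_{i=1}^{d+1}H^{(n)}_{k+1}\bigl(V^\pi(k)(1+\langle\rho_k,w^{n,i}\rangle),\xi^{(1)},\ldots,\xi^{(k)},w^{(i)}\bigr),
\]
which is bounded below by the infimum of the same expression over $u\in K_{J_n}$ in place of $\rho_k$. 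Combining this with the obstacle $(Y^{\xi,n}(k)-V^\pi(k))^+$ appearing in (\ref{5.9}) and comparing with the recursion (\ref{5.11}) yields the inductive inequality at step $k$.

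For $\tilde\pi=\tilde\pi(n,x)$ defined in (\ref{5.13}), the control $\rho_k=h^{(n)}_k(V^{\tilde\pi}(k),\xi^{(1)},\ldots,\xi^{(k)})$ attains the argmin in (\ref{5.11+}); the existence of such a minimizer follows from compactness of $K_{J_n}$ together with the continuity of $H^{(n)}_{k+1}$ supplied by Lemma \ref{lem5.2}, and a measurable selection may be arranged using the piecewise-linear structure described in the proof of that lemma. Since $\rho_k$ takes values in $K_{J_n}$, the portfolio $\tilde\pi$ is automatically admissible, $\tilde\pi\in\mathcal{A}^{\xi,n}(x)$. Repeating the induction above, now with equalities in place of the two $\ge$ signs (the inner one because $\rho_k$ realizes the infimum, the outer one because conditional expectation of the equal quantities is equal), yields $U^{\tilde\pi}(k)=H^{(n)}_k(V^{\tilde\pi}(k),\xi^{(1)},\ldots,\xi^{(k)})$ for every $k$, and in particular $R_n(\tilde\pi)=U^{\tilde\pi}(0)=H^{(n)}_0(x)$.

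The main obstacle is essentially bookkeeping: one must make explicit the bijection between admissible wealth-increment controls $\rho_k$ and $\mathcal{F}^\xi_k$-measurable selectors into $K_{J_n}$, which is what allows the admissibility constraint $V^\pi\ge 0$ to be faithfully absorbed into the DP recursion (\ref{5.11}). Once this identification and the measurable-selection point for $h^{(n)}_k$ are in place, the theorem reduces to the standard verification argument sketched above.
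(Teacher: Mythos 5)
Your proposal is correct and follows essentially the same route as the paper: the same backward induction establishing $U^\pi(k)\ge H^{(n)}_k(V^\pi(k),\xi^{(1)},\ldots,\xi^{(k)})$ with equality for $\tilde\pi(n,x)$, the same identification of admissible wealth increments with $\mathcal{F}^\xi_k$-measurable selectors into $K_{J_n}$, and the same conclusion via (\ref{5.10}). Your explicit remarks on the monotonicity of $H^{(n)}_0$ and on measurable selection of the argmin are points the paper leaves implicit, but they do not change the argument.
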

\begin{proof}
Fix $n\in\mathbb{N}$ and $x>0$. Let $\pi\in\mathcal{A}^{\xi,n}(x)$ an arbitrary
portfolio. Denote $\tilde{\pi}=\tilde\pi(n,x)$. First we prove by backward induction that
for any $k\leq n$,
\begin{equation}\label{5.15}
H^{(n)}_k(V^{\pi}(k),\xi^{(1)},...,\xi^{(k)})\leq U^{\pi}(k)
\ \mbox{and} \
H^{(n)}_k(V^{\tilde\pi}(k),\xi^{(1)},...,\xi^{(k)})= U^{\tilde\pi}(k).
\end{equation}
For $k=n$,
we obtain from (\ref{5.9}) and (\ref{5.11-})--(\ref{5.11})
that the relations (\ref{5.15}) hold with equality. Suppose that
(\ref{5.15}) holds true for $k+1$ and prove them for $k$. Let
$\rho:\Omega_{\xi}\rightarrow K_{J_n}$ be a
$\mathcal{F}^\xi_k$ measurable random vector such that
$V^\pi(k+1)=V^\pi(k)(1+\langle \rho,\frac{T}{n}b+
{\sqrt\frac{T}{n}}\xi^{(k+1)}\sigma^{*} \rangle)$. From the induction assumption we obtain
\begin{eqnarray}\label{5.16}
&E^{\xi}(U^\pi(k+1)|\mathcal{F}^\xi_k)\geq
E^{\xi}(H^{(n)}_k(V^{\pi}(k+1),\xi^{(1)},...,\xi^{(k)},\xi^{(k+1)})|\mathcal{F}^\xi_k)\\
&=\frac{1}{d+1}
\sum_{i=1}^{d+1} H^{(n)}_{k+1}(V^\pi(k)(1+\langle \rho,\frac{T}{n}b+\sqrt{\frac{T}{n}}w^{(i)}\sigma^{*}\rangle),
\xi^{(1)},...,\xi^{(k)},w^{(i)})\geq\nonumber\\
&\frac{1}{d+1} \inf_{u\in {K_{J_n}}}
\sum_{i=1}^{d+1} H^{(n)}_{k+1}(V^{\pi}(k)(1+\langle u,\frac{T}{n}b+\sqrt{\frac{T}{n}}w^{(i)}\sigma^{*}\rangle),
\xi^{(1)},...,\xi^{(k)},w^{(i)}).\nonumber
\end{eqnarray}
Denote $\tilde{\rho}=h^{(n)}_k(V^{\tilde\pi}(k),\xi^{(1)},...,\xi^{(k)})$.
From (\ref{5.12})--(\ref{5.13}) and the induction assumption it follows
\begin{eqnarray}\label{5.17}
&E^{\xi}(U^{\tilde\pi}(k+1)|\mathcal{F}^\xi_k)=
E^{\xi}(H^{(n)}_k(V^{\tilde\pi}(k+1),\xi^{(1)},...,\xi^{(k)},\xi^{(k+1)})|\mathcal{F}^\xi_k)\\
&=\frac{1}{d+1}
\sum_{i=1}^{d+1} H^{(n)}_{k+1}(V^{\tilde\pi}(k)(1+\langle \tilde{\rho},\frac{T}{n}b+\sqrt{\frac{T}{n}}w^{(i)}\sigma^{*}\rangle),
\xi^{(1)},...,\xi^{(k)},w^{(i)})=\nonumber\\
&\frac{1}{d+1} \inf_{u\in {K_{J_n}}}
\sum_{i=1}^{d+1} H^{(n)}_{k+1}(V^{\tilde\pi}(k)(1+\langle u,\frac{T}{n}b+\sqrt{\frac{T}{n}}w^{(i)}\sigma^{*}\rangle),
\xi^{(1)},...,\xi^{(k)},w^{(i)}).\nonumber
\end{eqnarray}
Combining (\ref{5.9}), (\ref{5.11-})--(\ref{5.11}) and (\ref{5.16})-(\ref{5.17}) we obtain that
(\ref{5.15}) holds true. Next, by using
(\ref{5.15}) for $k=0$ and (\ref{5.10}) it follows that for any $\pi\in\mathcal{A}^{\xi,n}(x)$
\begin{equation*}
R_n(\pi)=U^\pi(0)\geq  H^{(n)}_0(V^\pi(0))\geq H^{(n)}_0(x)=U^{\tilde\pi}(0)=R_n(\tilde\pi).
\end{equation*}
Thus $R_n(x)=R_n(\tilde\pi)=H^{(n)}_0(x)$, as required.
\end{proof}
${}$\\
\begin{Large}
\textbf{Acknowledgments}
\end{Large}
\\
\\
Part of this work was done during my P.hD studies at the Hebrew University.
 I would like to express my deepest gratitude to
my P.hD adviser, Yuri Kifer, for his guidance throughout my graduate studies.
I am also very grateful to Martin Schweizer for helping me to present this work.
During my P.hD studies I was partially supported by ISF grant no. 130/06.

\end{document}